\newcommand{\avec}{{\bf{a}}}
\newcommand{\cvec}{{\bf{c}}}
\newcommand{\evec}{{\bf{e}}}
\newcommand{\yvec}{{\bf{y}}}
\newcommand{\xvec}{{\bf{x}}}
\newcommand{\zvec}{{\bf{z}}}
\newcommand{\rvec}{{\bf{r}}}
\newcommand{\zerovec}{{\bf{0}}}
\newcommand{\Amat}{{\bf{A}}}
\newcommand{\Fmat}{{\bf{F}}}
\newcommand{\Hmat}{{\bf{H}}}
\newcommand{\Imat}{{\bf{I}}}
\newcommand{\Kmat}{{\bf{K}}}
\newcommand{\Pmat}{{\bf{P}}}
\newcommand{\Smat}{{\bf{S}}}
\newcommand{\Rmat}{{\bf{R}}}
\newcommand{\define}{\stackrel{\triangle}{=}}
\def\thetavec{{\mbox{\boldmath $\theta$}}}
\newcommand{\be}{\begin{equation}}
\newcommand{\ee}{\end{equation}}
\newcommand{\beqna}{\begin{eqnarray}}
\newcommand{\eeqna}{\end{eqnarray}}
\newcommand{\edis}{{\evec\sim \mathcal{N}(\bf{0},\,\Rmat)}}
\newcommand{\Rupm}{{\mathbb{R}^{M}}}
\newcommand{\upperRomannumeral}[1]{\uppercase\expandafter{\romannumeral#1}}
\newcommand{\Hnull} {\mathcal{H}_{0}}
\newcommand{\Ha} {\mathcal{H}_{1}}
\newcommand{\bt}{\boldsymbol{\theta}}
\newcommand{\hbt}{\hat{\boldsymbol{\theta}}}
\newtheorem{thm}{Theorem}
\newtheorem{prop}[thm]{Proposition}
\theoremstyle{definition}
\newcommand{\RNum}[1]{\uppercase\expandafter{\romannumeral #1\relax}}
\newcommand{\bcolon}{\boldsymbol{:}}
\begin{document}

\captionsetup[figure]{labelfont={bf},labelformat={default},labelsep=period,name={Fig.}}

\title{Structural-constrained Methods for the Identification of False Data Injection Attacks in Power Systems}
\author{Gal Morgenstern,~\IEEEmembership{Student Member,~IEEE} and Tirza~Routtenberg, \IEEEmembership{Senior Member,~IEEE}
\vspace{-0.1cm}
\thanks{Gal Morgenstern and Tirza Routtenberg are with School of Electrical and Computer Engineering,   Ben-Gurion University of the Negev,
	Beer-Sheva 84105, Israel, Email: galmo@post.bgu.ac.il, tirzar@bgu.ac.il.}
\thanks{This research was partially supported by the 
	ISRAEL SCIENCE FOUNDATION (ISF), Grant No. 1173/16
	and by the BGU Cyber Security Research Center.
Gal Morgenstern has been funded by the Israeli Ministry of Science and Technology. }
}

\maketitle

\begin{abstract}
	Power system functionality is determined on the basis of power system state estimation (PSSE). 
	Thus, corruption of the PSSE may lead to severe consequences, such as disruptions in electricity distribution, maintenance damage, and financial losses.
	Classical bad data detection (BDD) methods, developed to ensure PSSE reliability, 
	are unable to detect well-designed attacks, named unobservable false data injection (FDI) attacks.
	In this paper, we develop novel structural-constrained methods for the detection of unobservable FDI attacks and the identification of the attacked buses.
	The proposed methods are based on formulating structural, sparse constraints on both the attack and the system loads.
	First, we exploit these constraints in order to compose an appropriate model selection problem.
	Then, we develop the associated generalized information criterion (GIC) for this problem.
	However, the GIC method's computational complexity
	grows exponentially with the network size, which may be prohibitive for large networks.
	Therefore, based on the proposed structural and sparse constraints, we develop two novel low-complexity methods for the practical identification of unobservable FDI attacks:
	1) a modification of the state-of-the-art orthogonal matching pursuit (OMP) method; 
	and 2) a method that utilizes the graph Markovian property in power systems, i.e. the second-neighbor relationship between the power data at the system's buses. 
	In order to analyze the performance of the proposed methods, the  appropriate  oracle and clairvoyant detectors are also derived.
	The performance of the proposed methods is evaluated on the IEEE-30 bus test case.
\end{abstract}

\begin{IEEEkeywords}
Attack detection and identification, false data injection (FDI) attacks, graph Markovian property, model selection, structural constraints, 
\end{IEEEkeywords}

\section{Introduction}
\IEEEPARstart{M}{odern} 
electrical grids  are monitored by energy management systems (EMSs). 
The EMS evaluates the power system state estimation (PSSE) for multiple monitoring purposes,
including stability assessment, control, and security  \cite{gomez2004power,giannakis2013monitoring}. 
To ensure the reliability of the measurements, residual-based bad data detection (BDD) methods have been integrated into the EMS \cite{gomez2004power}.
However, BDD methods cannot detect well-designed attacks, 
named unobservable false data injection (FDI) attacks \cite{liu2011false,liang2017review}.
Unobservable FDI attacks are achieved by manipulating measurements based on the power network topology \cite{liu2011false},
where the topology matrix is either known or can be estimated from historical data \cite{kekatos2015online,grotas2019power,kim2014subspace,Halihal_Routtenberg_2022}. 
Unobservable FDI attacks may inflict severe damage by influencing the PSSE \cite{liang2017review,xie2011integrity,jia2014impact}.
Therefore, appropriate tools for the detection, identification,
and estimation of these attacks, that ensure the reliability of
the PSSE, are essential for obtaining high power quality and
maintaining stable power system operation.

The problem of detecting unobservable FDI attacks has been considered in the last decade.
Methods that strategically protect a basic set of measurements were proposed in \cite{bi2014graphical,deng2015defending,kim2011strategic}.
In the same context, using synchronized phasor measurement units (PMUs) has been suggested \cite{kim2011strategic,chen2006placement,kim2013phasor}.
However, these approaches require integrating additional hardware into the power system, which results in high cost, a long installation period, 
and additional security vulnerabilities \cite{mohammadi2018noncircular}. 
In addition, they may result in an unobservable system \cite{gomez2004power,dabush2021state}. 
Detection by the moving target defense technique, where the system configuration is actively changed,
has been proposed in \cite{rahman2014moving,liu2018reactance}.
However, in these methods, effective attack detection carries the cost of operating far from the optimal state
and, thereby, incurs economic losses \cite{liu2018reactance}.
Detection methods based on machine learning and data mining have been suggested in \cite{esmalifalak2017detecting,wang2017novel,he2017real,almutairy2021accurate},
but these methods require a large set of historic power system data, 
which is usually unavailable.
Detection based on load forecasting has been suggested in \cite{xu2017achieving}, 
yet obtaining a reliable forecast is not ensured and may require extensive resources \cite{gross1987short}.
In conclusion, all the above methods cannot serve as practical solutions for the detection of unobservable FDI attacks in power systems. 
Moreover, these methods do not provide attack identification, i.e. localization
of the attacked buses. 

In this paper, we suggest a novel compressive sensing (CS) parametric approach, based on structural constraints,
for the detection of unobservable FDI attacks and the identification of the attacked buses. 
Sparse recovery CS techniques have become a foremost research area in the last two decades 
(see, e.g. {\cite{Chen_Donoho_2001,Tropp_2004,elad2010sparse,donoho2005stable}} and references therein). 
In power systems, CS algorithms have been used for multi-line outage identification {\cite{Zhu_Giannakis_2012,Chen_Zhao_Goldsmith_Poor2014}},
gross error identification \cite{Kim_Raich_2016}, identification of imbalances \cite{routtenberg2017centralized}, and BDD \cite{Xu_Tang_2013}.
CS algorithms designed against unobservable FDI attack have been provided in \cite{liu2014detecting,hao2015sparse,gao2016identification}, 
where these methods exploit temporal correlations of the power measurements. 
However, the methods in \cite{liu2014detecting,hao2015sparse,gao2016identification} require multi-time measurements
and do not consider the difference between the structures of the unobservable FDI attack and of the system measurements. 

In this paper, 
we formulate structural and sparse constraints for both the FDI attack and the change in the power system loads
between two consecutive time samples.
Accordingly, 
we formulate a model selection problem, 
where in each model the attack is assumed to be represented by a different set of dictionary elements, i.e. the topology matrix columns. 
Based on the model selection formulation and the generalized information criterion (GIC) selection rule \cite{stoica2004model},
we develop a structural-constrained method 
for the detection, identification, and estimation of unobservable FDI attacks.
However, the proposed GIC-based method requires an exhaustive search, for which the required computational complexity increases 
exponentially with the network size, and is therefore practically infeasible.  
Thus, based on the proposed structural and sparse constraints, we develop two novel low-complexity methods for unobservable FDI attack identification:
1) a modification of the state-of-the-art orthogonal
matching pursuit (OMP) method \cite{mallat1993matching}; 
and 2) a method that is based on the graph Markovian property in power systems, i.e. second-neighbor relationship  \cite{sedghi2015statistical}.
Finally, we show by numerical simulation that the proposed GIC, OMP, and Graph Markovian GIC (GM-GIC) methods outperform the detection and identification performance of existing methods. 
In particular, the proposed {\em{parametric}} methods that are based on information criteria outperform our previous
{\em{non-parametric}} detection methods in \cite{drayer2019detection},\cite{drayer2018detection} that are heuristic in nature and are based on the Graph Fourier Transform (GFT). 

The main contribution of this paper is threefold. 
First, we present a new model that takes into account the physical, statistical, and structural properties of unobservable FDI attacks and the typical load changes in power systems. 
Second, by leveraging the proposed model, we derive a model-selection-based approach for unobservable FDI attack identification. 
Finally, we propose an OMP-based method and the novel low-complexity GM-GIC method that utilizes both the sparsity and the graphical properties of the problem 
in order to reduce complexity, while preserving high capabilities for the identification of unobservable FDI attacks. 
Further,  we demonstrate that the GM-GIC method can be used in the general context of sparse recovery of a graph signal 
from the outputs of a graph filter, 
and is not limited to power system applications.

The remainder of this paper is organized as follows. 
In Section \ref{sec; Background} we introduce the necessary background on the power flow equations, BDD, and unobservable FDI attacks. 
The proposed structural-constrained model for unobservable FDI attacks is presented in Section \ref{sec; structural model}. 
This model is then used in Section \ref{sec; detection and identification} to develop the  GIC-based approach for the detection and identification of unobservable FDI attacks, which is the basis for the two low-complexity methods developed in Section \ref{sec; large scale}.  
Next, a simulation study is presented in Section \ref{sec; Simulation study} and conclusions in Section \ref{sec; Conclusions}.

In this paper vectors are denoted by boldface lowercase letters
and matrices are denoted by boldface uppercase letters.
The operators $\lvert\lvert \cdot \rvert\rvert$ and $\lvert\lvert \cdot \rvert\rvert_0$ 
denote the Euclidean norm and the zero seminorm, respectively, 
where the latter specifies the number of non-zero elements in the vector. 
The operators  $(\cdot)^T$ and $(\cdot)^{-1}$ are the transpose and inverse operators, respectively. 
The linear space spanned by the $\Amat$ matrix columns is denoted by $\text{col}(\Amat)$.
For an index set, $\Lambda \subset \{1,\dots,L\}$, 
$\thetavec_{\Lambda}$ is the $\lvert\Lambda\rvert$-dimensional subvector of $\thetavec$ containing the elements indexed by $\Lambda$, 
where  $\lvert\Lambda\rvert$ denotes the set's cardinality. 
For any index sets, $\Lambda_1$ and $\Lambda_2$, $\Amat_{\Lambda_1,\Lambda_2}$ is the submatrix 
composed by the columns and rows of $\Amat$ associated with  $\Lambda_1$ and $\Lambda_2$, respectively. 

\section{Background}
\label{sec; Background}
\subsection{System model} 
\label{sec;Sub; DC model}
A power system can be represented as an undirected weighted graph, ${\mathcal{G}}({\mathcal{V}},\mathcal{E})$, where the set of vertices, $\mathcal{V}$, 
is the set of buses (substations), and the edge set, $\mathcal{E}$, is the set of transmission lines between these buses.
The weight in each line is defined according to the $\pi$-model of transmission lines \cite{gomez2004power}.
Hence, the weight over line $(n,k)\in\mathcal{E}$ is given by the admittance of the transmission line, $Y_{n,k} $.
Specifically, in the direct current (DC) model, where branches are without resistance loss, only the imaginary part of the admittance (the susceptance) is considered as the line weight. 

The direct current (DC) power flow model  is a linearized representation of the power measurements.
This model
defines the relation between the active power measurements in the buses and power flows in the transmission lines, $\zvec=[z_1,\cdots,z_{\scriptscriptstyle{M}}]^T\in\Rupm$,
and the voltage angles (``states") at the $N$ buses, $\bt=[\theta_1,\ldots,\theta_{N}]^T$, in the power system network \cite{gomez2004power}. 
Based on the DC model, the standard observation model for data injection attacks is given by \cite{cui2012coordinated}:
\be \label{eq; observation model}
\begin{gathered}
	\zvec=\Hmat\bt+\avec+\evec. 
\end{gathered}
\ee
The topology matrix, $\Hmat$, is a constant $M\times N$ Jacobian matrix, $N<M$,
which is composed of the susceptance elements
(as described, for example, in \cite{kosut2011malicious,gomez2004power}). 
In addition, $\evec\in\Rupm$ is a zero-mean Gaussian additive noise vector with covariance matrix $\Rmat$, $\edis$.  
The attack is denoted by $\avec\in\Rupm$.
It is assumed that $\Hmat$ is a full-rank matrix,
i.e. ${\text{rank}(\Hmat)=N-1}$, and that the system is fully observable.

\subsection{Power system state estimation (PSSE)}
\label{sec; sub; PSSE and Bad data detection}
The PSSE is commonly used for multiple monitoring purposes,
where under the DC model, the system states are the voltage angles, $\bt$. 
The PSSE is commonly computed using the weighted least squares (WLS) estimator \cite{gomez2004power}:
\be\label{eq; WLS}
\hbt^{\scriptstyle\text{WLS}} =  \underset{\bt\in\mathbb{R}^N}{\text{argmin}} ~ (\zvec-\Hmat\bt)^{T}\Rmat^{-1}(\zvec-\Hmat\bt)  
= \Kmat\zvec,
\ee 
where 
\be \label{eq; K matrix}
\Kmat=(\Hmat^{T}\Rmat^{-1}\Hmat)^{-1}\Hmat^{T}\Rmat^{-1}.
\ee

In order to robustify the PSSE against errors, classical BDD methods are implemented \cite{gomez2004power}. 
These BDD methods are based on the residual error:
\be \label{eq; residual error}
\rvec=\zvec-\Hmat	\hbt^{\scriptstyle\text{WLS}}= (\Imat-\Hmat\Kmat)\zvec,
\ee
where the last equality is obtained by substituting the estimated state vector as defined in (\ref{eq; WLS}).
The residual error from \eqref{eq; residual error} is used, for example, 
in the Largest Normalized Residual $\rvec_{\text{max}}^N$-test 
and the $\chi^2$-test\cite{gomez2004power}, 
which is given by
\be \label{eq; BDD}
T^{\scriptstyle{\text{BDD}}} = \lvert\lvert \zvec- \Hmat	\hbt^{\scriptstyle\text{WLS}} \rvert\rvert^2  \mathop{\gtrless}_{\Hnull}^{\mathcal{H}_1} \gamma^{\scriptstyle{\text{BDD}}} ,
\ee
where $\Ha$ is the hypothesis that the measurements are corrupted by bad data, which may be  the  result of an attack, and $\Hnull$ is the null  hypothesis.
The threshold, $\gamma^{\scriptstyle{\text{BDD}}}$, is determined to obtain a desired false alarm  probability. 
However, both the Largest Normalized Residual, which is an identification method, and the $\chi^2$-test, 
which is a detection method, can neither detect nor identify the presence of unobservable FDI attacks, as described in the next subsection.

\subsection{ Unobservable FDI attacks}
\label{sec; sub; unobservable FDI attacks}
The unobservable FDI attack satisfies
\be \label {eq; FDIA}
\avec\define \Hmat\cvec,
\ee
where $\cvec\in\mathbb{R}^N$ is an arbitrary constant vector.
By substituting the well-designed attack from \eqref{eq; FDIA} in the observation model from \eqref{eq; observation model} we obtain 
\be \label{eq; FDIA model}
\zvec=\Hmat(\bt+\cvec)+\evec.
\ee
From a comparison between the models in \eqref{eq; observation model} and in \eqref{eq; FDIA model}, 
it can be verified that by observing $\zvec$, the state vector in \eqref{eq; observation model}, $\bt$, cannot be distinguished from 
its corrupted (attacked) version in \eqref{eq; FDIA model}, $\bt+\cvec$, since both $\bt$ and $\cvec$ are unknown vectors.
As a result, the residual error, obtained by substituting the WLS estimation 
(which is based on the unobservable FDI model from \eqref{eq; FDIA model}) in \eqref{eq; residual error} is 
\be \label{eq; dddd}
\rvec= (\Imat-\Hmat\Kmat)\evec,
\ee 
and therefore cannot be utilized in order to indicate the presence of an unobservable FDI attack.
Consequently, all residual-based methods,
as well as all the other methods that are based solely on the model in \eqref{eq; FDIA model}, are expected to fail in detecting unobservable FDI attacks.
Therefore, methods for the identification of manipulated measurements
that integrate additional constraints are required.

\section{Structural-constrained modeling for unobservable FDI attack identification}
\label {sec; structural model} 
Due to  the unobservable FDI attack formation and further sparsity restrictions, 
the additional power caused by the attack is constrained to a small subset of buses in the network.
Hence, in this section, a new framework that is based on observing power changes
in small subnetworks of the system is proposed.
First, the assumptions behind this framework are presented in Subsection \ref{sec; assumptions}. 
Then,  the new structural-constrained  model for power measurements in the presence of unobservable FDI attacks is developed in Subsection \ref{sec; framework}.

\subsection{Assumptions }
\label{sec; assumptions}
The proposed framework,
which facilitates structural constraints on the unobservable FDI attack and the typical load changes,
is constructed by applying the following definitions and assumptions:
\begin{enumerate}[label=\textbf{A.\arabic*},leftmargin=0.9cm]
	\item \label{A; diff} 
	\textbf{Difference-based model:}
	Similarly to the models in \cite{xu2017achieving} and \cite{jiang2017defense}, we assume that two consecutive time samples are observed from the model in \eqref{eq; FDIA model} 
	at times $t$ and $t+1$.  
	The first, $\zvec_t$, is free from malicious attacks, i.e. $\cvec_{t}=\zerovec$, while the second, $\zvec_{t+1}$, may contain an unobservable FDI attack, 
	i.e. $\cvec_{t+1}=\cvec$. 
	Thus, we obtain 
	\be \label{eq; attack difference} 
	\Delta\cvec\define\cvec_{t+1}-\cvec_{t}=\cvec,
	\ee
	where $\cvec=0$ if there is no attack.
	It should be noted that this assumption is not restrictive since, in practice, the proposed detection method can be integrated into an adaptive scheme in similar manner to 
	the detection method outlined in \cite{jiang2017defense}.
	The adaptive scheme is initialized by a secure (free of an unobservable FDI attack) measurement, and then the system is constantly monitored by observing the difference between two consecutive measurements. 
	Consequently, only a single measurement used for initializing the adaptive scheme is required to be free of an attack.

	\item \label{A; restricted}
	\textbf{Restricted measurements:} 
	Since generator buses are heavily secured and obtain direct communication to the control center \cite{yuan2011modeling}, \cite{tan2017modeling},
	we assume that generator bus measurements cannot be manipulated.
	Additionally, it is assumed that `zero load' measurements cannot be manipulated.  
	Mathematically, this assumption implies that
	\be
	\Hmat_{\{\{\mathcal{V}\setminus\mathcal{L}\},\mathcal{V}\}}\cvec=\zerovec,
	\ee 
	where the set $\{\mathcal{V}\setminus\mathcal{L}\}$ includes the generator and `zero load' buses, 
	i.e. all buses except the load buses stored in $\mathcal{L}$. 
	
	\item \label{A; subsets}
	\textbf{State sparsity:}
	The number of manipulated state variables is  bounded by the sparsity term $K_{\cvec}$, 
	which is considered to  be significantly smaller than the cardinality of the node set, $\lvert\mathcal{V}\rvert$, i.e. 
	$K_{\cvec}\ll\lvert \mathcal{V}\rvert$. 
	Following this sparsity restriction,  we define the set 
	\be \label{eq; G K C}
	\mathcal{G}_{K_{\cvec}}\define\{\Lambda\subset \mathcal{V}:~1\le \lvert \Lambda \rvert  \le K_{\cvec}  \},
	\ee
	that includes all possible supports of the state attack $\cvec$.
	As a result, there exists a node subset ${\Lambda_i\in	\mathcal{G}_{K_{\cvec}}}$, where $i=1,2,\cdots,\lvert\mathcal{G}_{K_{\cvec}}\rvert$, that fully contains the attack and satisfies
	\be \label{eq; lambda equal}
	\Hmat_{\bcolon, \mathcal{V} }\cvec=	\Hmat_{\bcolon,\Lambda_i}\cvec_{\Lambda_i}.
	\ee
	It is shown in \cite{liu2011false}, that this assumption  (also used in \cite{gao2016identification})   
	stems directly from the   commonly-used sparsity restriction on the number of manipulated meters, 
	which states that the attack vector, $\avec$, is sparse (see, e.g. \cite{kosut2011malicious}).  
	
	\item \label{A; quasi}
	\textbf{Typical load changes (quasi-static system): } 
	Power systems under normal conditions are quasi-static systems that only change slightly over a short period of time \cite{monticelli2000electric, liu2014detecting}. 
	Therefore, it is considered that   typical load changes satisfy \color{black}
	\be  \label{eq; quasi-static}
	\lvert\lvert \Hmat_{\mathcal{L},\mathcal{V}} \Delta\bt \rvert\rvert^2<\eta,
	\ee
	where  $\eta$ is  a relatively small  tuning parameter  that can be obtained from the system statistics.
	\item \label{A; loads}
	\textbf{Typical load changes (structural properties): } 
	The typical load changes w.r.t. the actual load measurements at a specific moment
	are determined by the consumption demand \cite{gross1987short}.
	Thus, representing the typical load changes in a matrix form will output a non-sparse vector that is not related to the system topology matrix.
	As a result, composing the typical load changes requires all of the topology matrix columns.
	Moreover, a close representation can be achieved only by using the vast majority of the topology matrix columns, which is equivalent to excluding
	only a small subset of topology matrix columns.
	Mathematically, we assume that for any $\Lambda_i\in\mathcal{G}_{K_{\cvec}}$, 
	there exists a small non-negative parameter $\epsilon_i$ such that 
	the orthogonal projection matrix, $\Pmat_{\Lambda_i}^{\bot}$, is an $(1-\epsilon_i)$ $\ell_2$-subspace embedding of $\Hmat_{\mathcal{L},\mathcal{V}}$  (see Definition 1 in \cite{woodruff2014sketching}, Subsection 2.1):
	\be \label{eq; embedding}
	\lvert\lvert \Pmat_{\Lambda_i}^{\bot} \Hmat_{\mathcal{L},\mathcal{V}} \Delta\bt \rvert\rvert^2=(1- \epsilon_i) 	 \lvert\lvert  \Hmat_{\mathcal{L},\mathcal{V}} \Delta\bt \rvert\rvert^2,
	\ee
	where 
	\be \label{eq; projection matrix}
	\Pmat_{\Lambda_i}\define\Hmat_{\mathcal{L},\Lambda_i} ( \Hmat_{\mathcal{L},\Lambda_i} ^T \Hmat_{\mathcal{L},\Lambda_i} )^{-1}   \Hmat_{\mathcal{L},\Lambda_i}^T
	\ee
	is the projection matrix onto the space $\text{col}(\Hmat_{\mathcal{L},\Lambda_i})$
	and  $\Pmat_{\Lambda_i}^\bot\define\Imat-\Pmat_{\Lambda_i}$ is the projection matrix onto the orthogonal space $(\text{col}(\Hmat_{\mathcal{L},\Lambda_i}))^{\bot}$.
	
\end{enumerate}
It can be seen that Assumptions \ref{A; diff}-\ref{A; subsets} define constraints on the unobservable FDI attack while
Assumptions \ref{A; quasi}-\ref{A; loads}, 
define constraints on the typical load changes.

\subsection{Structural-constrained model}
\label{sec; framework}
In the following, we use Assumptions \ref{A; diff}-\ref{A; loads} to construct the new model. First,
following Assumption \ref{A; diff}, we consider the difference-based model by taking the difference between two consecutive time samples
that satisfy the model in \eqref{eq; FDIA model},
with $\cvec=\boldsymbol{0}$ for the first time sample and an arbitrary $\cvec$ in the second time sample.
Thus, the difference-based observation model is given by
\be \label{eq; 1}
\Delta\zvec=\Hmat(\cvec+\Delta\bt)+\Delta\evec,
\ee
where $\Delta\bt$ is the change in the state vector, $\bt$, between the two consecutive time samples.
Furthermore, by assuming that the noises of the two consecutive time samples are independent, 
we obtain that the difference-based measurement noise in \eqref{eq; 1} is distributed according to ${\Delta\evec\sim \mathcal{N}(\boldsymbol{0},2\Rmat)}$,
where, for the sake of simplicity, it is assumed that $2\Rmat=\sigma_e^2\Imat$.

According to Assumption  \ref{A; restricted},
only the changes in the load buses' measurements are relevant for the attack.
Thus, the relevant measurement model for the detection of attacks is
obtained by taking only the load measurements from \eqref{eq; 1},
which results in
\be \label{eq; 2}
\Delta\zvec_{\mathcal{L}}=\Hmat_{\mathcal{L},\mathcal{V}}(\cvec+\Delta\bt)+\Delta\evec_{\mathcal{L}},
\ee
where $\Hmat_{\mathcal{L},\mathcal{V}}$ is the associated submatrix of the topology matrix $\Hmat$, 
and $\Delta\evec_{\mathcal{L}}$ is the corresponding noise. 

Assumption \ref{A; subsets} implies that the state attack vector support, $\Lambda$, is in the set $\mathcal{G}_{K_{\cvec}}$.
Hence, as shown in \eqref{eq; lambda equal}, there exists a subset of nodes, ${\Lambda_i\in	\mathcal{G}_{K_{\cvec}}}$, that fully contains the attack.
By substituting  \eqref{eq; lambda equal}
in \eqref{eq; 2} we obtain 
\be \label{eq; 3}
\Delta\zvec_{\mathcal{L}}=\Hmat_{\mathcal{L},\Lambda_i}\cvec_{\Lambda_i} +\Hmat_{\mathcal{L},\mathcal{V}} \Delta\bt+\Delta\evec_{\mathcal{L}}.
\ee
Thus, the projection of the measurement vector in \eqref{eq; 3} onto $\text{col}(\Hmat_{\mathcal{L},\Lambda_i})$ satisfies
\be \label{eq; model before GLRT}
\begin{aligned}
	\Pmat_{\Lambda_i}\Delta\zvec_{\mathcal{L}}&= \Pmat_{\Lambda_i}(\Hmat_{\mathcal{L},\Lambda_i}\cvec_{\Lambda_i} +\Hmat_{\mathcal{L},\mathcal{V}} \Delta\bt+\Delta\evec_{\mathcal{L}}) \\
	&=    \Hmat_{\mathcal{L},\Lambda_i}\cvec_{\Lambda_i} +\Pmat_{\Lambda_i}(\Hmat_{\mathcal{L},\mathcal{V}} \Delta\bt+\Delta\evec_{\mathcal{L}}), 
\end{aligned}
\ee
where $\Pmat_{\Lambda_i}\Hmat_{\mathcal{L},\Lambda_i}\cvec_{\Lambda_i}= \Hmat_{\mathcal{L},\Lambda_i}\cvec_{\Lambda_i}$ results from \eqref{eq; projection matrix}. 
In addition, by substituting the following property of projection matrices (see, e.g. p. 46 in \cite{yanai2011projection}):
\be \label{eq; projection property}
\lvert\lvert \Pmat_{\Lambda_i}^{\bot} \Hmat_{\mathcal{L},\mathcal{V}} \Delta\bt \rvert\rvert^2= \lvert\lvert  \Hmat_{\mathcal{L},\mathcal{V}} \Delta\bt \rvert\rvert^2-  \lvert\lvert \Pmat_{\Lambda_i} \Hmat_{\mathcal{L},\mathcal{V}} \Delta\bt \rvert\rvert^2
\ee
in \eqref{eq; embedding} from Assumption \ref{A; loads} and rearranging the equation, we obtain
\be \label{eq; approximation error}
\epsilon_i= \frac{ \lvert\lvert \Pmat_{\Lambda_i} \Hmat_{\mathcal{L},\mathcal{V}} \Delta\bt \rvert\rvert^2}{ \lvert\lvert  \Hmat_{\mathcal{L},\mathcal{V}} \Delta\bt \rvert\rvert^2}.
\ee
Hence, by substituting \eqref{eq; approximation error} in \eqref{eq; quasi-static}, we obtain 
that the projection of $ \Hmat_{\mathcal{L},\mathcal{V}} \Delta\bt $ onto the column space of the submatrix, $\Hmat_{\mathcal{L},\Lambda_i}$,
is bounded by 
\be  \label{eq; projection loads}
\lvert\lvert  \Pmat_{\Lambda_i}  \Hmat_{\mathcal{L},\mathcal{V}} \Delta\bt \rvert\rvert^2  \le \epsilon_i\eta,
\ee
where, according to Assumptions \ref{A; quasi}-\ref{A; loads}, $\epsilon_i\eta$ is a small parameter for any $i=1,\ldots,\lvert\mathcal{G}_{K_{\cvec}}\rvert$.
\color{black}

Based on \eqref{eq; 3} and \eqref{eq; projection loads}, identifying
the subset of attacked buses, $\Lambda_i$, under the considered model can be formulated as the following multiple hypothesis testing problem: 
\be \label{eq; model selection problem the}
\begin{aligned}
	&\mathcal{H}_0:~~~  \Delta\zvec_{\mathcal{L}}=\Hmat_{\mathcal{L},\mathcal{V}} \Delta\bt+\Delta\evec_{\mathcal{L}} \\[1.5pt]
	&\mathcal{H}_i:~~~ \begin{cases}
		\Delta\zvec_{\mathcal{L}}=\Hmat_{\mathcal{L},\Lambda_i}\cvec_{\Lambda_i} +\Hmat_{\mathcal{L},\mathcal{V}} \Delta\bt+\Delta\evec_{\mathcal{L}} \\
		\text{s.t.}~   \lvert \lvert\Pmat_{\Lambda_i}  \Hmat_{\mathcal{L},\mathcal{V}} \Delta\bt \rvert\rvert^2 \le \epsilon_i\eta, 
	\end{cases}
	&                                          
\end{aligned}
\ee
where $\epsilon_i\eta$ is small for any  $i=1,\ldots,\lvert\mathcal{G}_{K_{\cvec}}\rvert$.
Each hypothesis $\mathcal{H}_i$ in \eqref{eq; model selection problem the}
assumes a different support, $\Lambda_i\in\mathcal{G}_{K_{\cvec}}$, for the state attack vector, $\cvec$.
The null hypothesis $\mathcal{H}_0$ is obtained by substituting $\cvec=\boldsymbol{0}$ in \eqref{eq; 3}.   

Since $\epsilon\eta_i$ is small, 
we adopt standard practices from the CS
literature \cite{Chen_Donoho_2001,Tropp_2004,elad2010sparse, donoho2005stable}, where it is common to exclude low amplitude samples from the sparse approximation in order to develop tractable methods. 
That is, instead of solving \eqref{eq; model selection problem the} one can solve  the following modified hypothesis testing problem:
\be \label{eq; model selection problem}
\begin{aligned}
	&\mathcal{H}_0:~~~  \Delta\zvec_{\mathcal{L}}=\Hmat_{\mathcal{L},\mathcal{V}} \Delta\bt+\Delta\evec_{\mathcal{L}} \\[1.5pt]
	&\mathcal{H}_i:~~~ \begin{cases}
		\Delta\zvec_{\mathcal{L}}=\Hmat_{\mathcal{L},\Lambda_i}\cvec_{\Lambda_i} +\Hmat_{\mathcal{L},\mathcal{V}} \Delta\bt+\Delta\evec_{\mathcal{L}} \\
		\text{s.t.}~   \Pmat_{\Lambda_i}  \Hmat_{\mathcal{L},\mathcal{V}} \Delta\bt= \zerovec,
	\end{cases}
	&                                          
\end{aligned}
\ee
where $i=1,\ldots,\lvert\mathcal{G}_{K_{\cvec}}\rvert$. 
The multiple hypothesis testing in \eqref{eq; model selection problem} provides a new framework for identifying the sparse  state attack vector $\cvec$
from measurements contaminated by additive noise and the nuisance parameter $\Delta\bt$. 
In the modified hypothesis testing   in \eqref{eq; model selection problem},  
the state attack vector $\cvec$ and the load changes $\Delta\bt$ are not in the same subspace. 
Hence, in contrast to the use of \eqref{eq; FDIA model}, under the framework in \eqref{eq; model selection problem} the attack is observable. 
Consequently, the formulation in \eqref{eq; model selection problem} is appropriate for the development of unobservable FDI attack identification algorithms such as
the GIC-based identification algorithm developed in Section \ref{sec; detection and identification} and
the  low-complexity practical algorithms described in Section \ref{sec; large scale}. 
The performance of the proposed methods is examined w.r.t. the values $\eta$ and $\epsilon_i,~i=1,\ldots \lvert\mathcal{G}_{K_{\cvec}}\rvert$, from the hypothesis testing in \eqref{eq; model selection problem the}
in Subsection \ref{sec; performance} and empirically in Section \ref{sec; Simulation study} (see Fig. \ref{fig; sigma}). 
\color{black}

A possible interpretation of the  problem in \eqref{eq; model selection problem the} and \eqref{eq; model selection problem} is as a special case of Matched Subspace Detection (MSD) \cite{scharf1994matched}, 
in which the measurements are composed of two deterministic signals and additive noise. 
In a similar manner to in our framework, one of the two deterministic signals is considered as the target signal (here, the attack) and the other is considered as the background signal (here, the  load changes).
However, our framework deviates from the standard MSD problem by: 1) the spanning subspace of the target signal is unknown; 
2) sparsity restrictions are assumed on the target signal; and 3) the subspace of the target is contained within the linear space that spans the background signal. 
Thus, standard MSD techniques cannot be applied to solve  \eqref{eq; model selection problem the} or \eqref{eq; model selection problem}. 
A different perspective, presented in
Subsection \ref{sec; GSP extensions}, is the applicativity of \eqref{eq; model selection problem} in the context of graph signal processing (GSP), where $\Hmat$ is a general graph filter.

\section{Identification of unobservable FDI attacks by the GIC approach}
\label{sec; detection and identification}
\subsection{GIC approach}
In the following, we derive the identification of unobservable FDI attacks by selecting the most likely hypothesis in \eqref{eq; model selection problem}, 
i.e. the most suitable choice of attack support, $\Lambda_i$, from the set of candidate
supports,  $\{\mathcal{G}_{K_{\cvec}}\cup\emptyset\}$,  given the measurement vector, $\Delta\zvec_{\mathcal{L}}$.
In order to solve \eqref{eq; model selection problem}, we implement the GIC selection rule \cite{stoica2004model},
which is widely employed in signal and array processing.
The GIC method chooses
the hypothesis $\mathcal{H}_i$ which maximizes the sum of two terms: the
likelihood term for data encoding,
$L(\cdot)$,
and a penalty function, $\tau(\cdot)$, that inhibits the number of free parameters of the model from becoming very large. 
For the considered hypothesis testing in \eqref{eq; model selection problem} and a given difference-based state vector $\Delta\bt$, the GIC statistic is given by
\be \label{eq; GIC function}
\text{GIC}(\Lambda_i,\tau(\lvert\Lambda_i\rvert, \lvert\mathcal{L}\lvert) ) \define  2L(\Delta\zvec_{\mathcal{L}}; \hat{\cvec}^{\text{ML}|i}_{\Lambda_i},\Delta\bt)-\tau(\lvert\Lambda_i\rvert, \lvert\mathcal{L}\lvert) ,
\ee
where $L(\Delta\zvec_{\mathcal{L}}; \hat{\cvec}^{\text{ML}|i}_{\Lambda_i},\Delta\bt)$ is the log-likelihood function of $\Delta\zvec$
under the $i$th hypothesis, which is associated with the support, $\Lambda_i$,
and $ \hat{\cvec}^{\text{ML}|i}$ is the  ML estimation of the state attack vector, $\cvec$.
Therefore, in the general case the GIC statistic is a function of $\Delta\bt$, which is an unknown deterministic vector.
However, as presented in the following,
the GIC selection rule in this case is independent of  $\Delta\bt$ and, therefore, is a valid rule. 
The term $\tau(\lvert\Lambda_i\rvert, \lvert\mathcal{L}\lvert) $ is a penalty function,
which increases with the number of free unknown parameters 
that is determined in this case by the number of manipulated variables, $\lvert \Lambda_i\rvert$,
and the number of load buses, $\lvert\mathcal{L}\rvert$.
In particular, a special case of the GIC family is the Akaike information criterion (AIC), 
for which the penalty term is
\be \label{eq; MDL AIC}
\tau(\lvert\Lambda_i\rvert, \lvert\mathcal{L}\lvert) =2\lvert\Lambda_i\rvert.
\ee
Further discussion on the penalty term is provided in the simulation study in Section \ref{sec; Simulation study}. 

\begin{prop} \label{thm; GIC independent}
	The GIC statistic in \eqref{eq; GIC function} satisfies
	\be \label{eq; GIC}
	\begin{aligned}
		\text{\normalfont{GIC}}(\Lambda_i,\tau(\lvert\Lambda_i\rvert, \lvert\mathcal{L}\lvert) ) &= \frac{1}{\sigma_{\evec}^2}\lvert\lvert \Pmat_{\Lambda_i}\Delta\zvec_{\mathcal{L}}  \rvert\rvert^2 -\tau(\lvert\Lambda_i\rvert, \lvert\mathcal{L}\lvert) \\
		&~~~~+const,
	\end{aligned}
	\ee
	where $const$ is a constant term that does not depend on the hypothesis, ${i=0,1,\ldots, \lvert \mathcal{G}_{K_{\cvec}} \rvert}$.
\end{prop}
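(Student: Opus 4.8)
The plan is to evaluate the concentrated (profile) log-likelihood under each hypothesis $\mathcal{H}_i$ of \eqref{eq; model selection problem} and then to show that the part of this quantity that involves the unknown nuisance vector $\Delta\bt$ is identical for all $i$, so that it can be collected into the hypothesis-independent term $const$. Throughout, $\Delta\bt$ is treated as a given deterministic vector, in accordance with the statement, and the maximum-likelihood (ML) estimation is carried out only over the state attack vector $\cvec_{\Lambda_i}$.

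First I would write the Gaussian log-likelihood under $\mathcal{H}_i$. Using $\Delta\evec_{\mathcal{L}}\sim\mathcal{N}(\zerovec,\sigma_{\evec}^2\Imat)$,
\be \label{eq; proof loglik}
2L(\Delta\zvec_{\mathcal{L}};\cvec_{\Lambda_i},\Delta\bt) = -\lvert\mathcal{L}\rvert\log(2\pi\sigma_{\evec}^2) - \frac{1}{\sigma_{\evec}^2}\lvert\lvert \Delta\zvec_{\mathcal{L}}-\Hmat_{\mathcal{L},\Lambda_i}\cvec_{\Lambda_i}-\Hmat_{\mathcal{L},\mathcal{V}}\Delta\bt \rvert\rvert^2 .
\ee
Next I would maximize over $\cvec_{\Lambda_i}$: this is an ordinary least-squares fit of $\Hmat_{\mathcal{L},\Lambda_i}\cvec_{\Lambda_i}$ to the vector $\Delta\zvec_{\mathcal{L}}-\Hmat_{\mathcal{L},\mathcal{V}}\Delta\bt$, and since $\Hmat_{\mathcal{L},\Lambda_i}$ has full column rank the fitted value is $\Pmat_{\Lambda_i}(\Delta\zvec_{\mathcal{L}}-\Hmat_{\mathcal{L},\mathcal{V}}\Delta\bt)$, with $\Pmat_{\Lambda_i}$ as in \eqref{eq; projection matrix}, so the minimized squared residual is $\lvert\lvert \Pmat_{\Lambda_i}^{\bot}(\Delta\zvec_{\mathcal{L}}-\Hmat_{\mathcal{L},\mathcal{V}}\Delta\bt) \rvert\rvert^2$.

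The crucial step is then to invoke the structural constraint of $\mathcal{H}_i$, namely $\Pmat_{\Lambda_i}\Hmat_{\mathcal{L},\mathcal{V}}\Delta\bt=\zerovec$. This gives both $\Pmat_{\Lambda_i}^{\bot}\Hmat_{\mathcal{L},\mathcal{V}}\Delta\bt=\Hmat_{\mathcal{L},\mathcal{V}}\Delta\bt$ and $\Pmat_{\Lambda_i}(\Delta\zvec_{\mathcal{L}}-\Hmat_{\mathcal{L},\mathcal{V}}\Delta\bt)=\Pmat_{\Lambda_i}\Delta\zvec_{\mathcal{L}}$, so that the orthogonal (Pythagorean) decomposition of $\Delta\zvec_{\mathcal{L}}-\Hmat_{\mathcal{L},\mathcal{V}}\Delta\bt$ along $\text{col}(\Hmat_{\mathcal{L},\Lambda_i})$ and its complement, in the spirit of \eqref{eq; projection property}, yields
\be \label{eq; proof decomp}
\lvert\lvert \Pmat_{\Lambda_i}^{\bot}(\Delta\zvec_{\mathcal{L}}-\Hmat_{\mathcal{L},\mathcal{V}}\Delta\bt) \rvert\rvert^2 = \lvert\lvert \Delta\zvec_{\mathcal{L}}-\Hmat_{\mathcal{L},\mathcal{V}}\Delta\bt \rvert\rvert^2 - \lvert\lvert \Pmat_{\Lambda_i}\Delta\zvec_{\mathcal{L}} \rvert\rvert^2 .
\ee
Substituting \eqref{eq; proof decomp} into \eqref{eq; proof loglik} at the ML estimate gives $2L(\Delta\zvec_{\mathcal{L}};\hat{\cvec}^{\text{ML}|i}_{\Lambda_i},\Delta\bt)=\frac{1}{\sigma_{\evec}^2}\lvert\lvert \Pmat_{\Lambda_i}\Delta\zvec_{\mathcal{L}} \rvert\rvert^2 + const$, where $const\define -\lvert\mathcal{L}\rvert\log(2\pi\sigma_{\evec}^2)-\frac{1}{\sigma_{\evec}^2}\lvert\lvert \Delta\zvec_{\mathcal{L}}-\Hmat_{\mathcal{L},\mathcal{V}}\Delta\bt \rvert\rvert^2$ depends on the unknown $\Delta\bt$ but not on the index $i$; subtracting the penalty $\tau(\lvert\Lambda_i\rvert,\lvert\mathcal{L}\rvert)$ then produces \eqref{eq; GIC}. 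Finally I would check the null case $i=0$ separately by taking $\Lambda_0=\emptyset$, $\Pmat_{\emptyset}=\zerovec$ and $\tau=0$, for which \eqref{eq; GIC} collapses to $\mathrm{GIC}=const$, matching the direct evaluation of $2L$ under $\mathcal{H}_0$ (where $\cvec=\zerovec$).

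The main obstacle is exactly the crucial step above: recognizing that the constraint $\Pmat_{\Lambda_i}\Hmat_{\mathcal{L},\mathcal{V}}\Delta\bt=\zerovec$ is precisely what makes the profile residual split into the $i$-dependent energy statistic $\lvert\lvert \Pmat_{\Lambda_i}\Delta\zvec_{\mathcal{L}} \rvert\rvert^2$ plus a remainder common to all hypotheses. Without that constraint, the term $\Pmat_{\Lambda_i}^{\bot}\Hmat_{\mathcal{L},\mathcal{V}}\Delta\bt$ would entangle the unknown nuisance $\Delta\bt$ with the hypothesis index, $const$ would no longer be index-free, and the GIC selection rule would cease to be independent of $\Delta\bt$. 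Everything else is routine algebra with orthogonal projection matrices.
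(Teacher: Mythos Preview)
Your proposal is correct and follows essentially the same route as the paper's proof in Appendix~\ref{App; proof log likelihood}: write the Gaussian log-likelihood, maximize over $\cvec_{\Lambda_i}$ to obtain the residual $\lVert\Pmat_{\Lambda_i}^{\bot}(\Delta\zvec_{\mathcal{L}}-\Hmat_{\mathcal{L},\mathcal{V}}\Delta\bt)\rVert^2$, and then use the constraint $\Pmat_{\Lambda_i}\Hmat_{\mathcal{L},\mathcal{V}}\Delta\bt=\zerovec$ together with the Pythagorean identity to split this into $\lVert\Delta\zvec_{\mathcal{L}}-\Hmat_{\mathcal{L},\mathcal{V}}\Delta\bt\rVert^2-\lVert\Pmat_{\Lambda_i}\Delta\zvec_{\mathcal{L}}\rVert^2$. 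The paper additionally records the explicit ML estimator $\hat{\cvec}^{\text{ML}|i}_{\Lambda_i}=(\Hmat_{\mathcal{L},\Lambda_i}^T\Hmat_{\mathcal{L},\Lambda_i})^{-1}\Hmat_{\mathcal{L},\Lambda_i}^T\Delta\zvec_{\mathcal{L}}$ as a by-product (used later for PSSE), but the argument is otherwise identical.
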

\begin{proof}
	The proof is provided in Appendix \ref{App; proof log likelihood}.
\end{proof}

It should be noted that for the null hypothesis in Proposition \ref{thm; GIC independent}, in which $\Lambda_0=\emptyset$, we use the convention that $ \Pmat_{\Lambda_0}\Delta\zvec_{\mathcal{L}} =\boldsymbol{0}$.
It can be seen from the GIC statistic in \eqref{eq; GIC} that the selected hypothesis is the one 
that maximizes the sum of the two terms: 1) the projection of the load measurements onto the associated ``attack subspace'', $\text{col}(\Hmat_{\mathcal{L},\Lambda_i})$,
by computing $ \Pmat_{\Lambda_i}\Delta\zvec_{\mathcal{L}}$; and 2) a penalty function, $ -\tau(\lvert\Lambda_i\rvert, \lvert\mathcal{L}\lvert)$, that encourages a sparse solution.
As a result, the representation of the GIC selection rule from \eqref{eq; GIC function} by \eqref{eq; GIC} 
shows that it is not a function of the unknown states, $\Delta\bt$, and therefore, it is a valid selection rule. 
An intuition regarding the GIC selection rule in Proposition \ref{thm; GIC independent} can be drawn from a comparison of the GIC statistic in \eqref{eq; GIC} with the bound in \eqref{eq; projection loads}. This comparison 
implies that  a high energy level for the projected signal in \eqref{eq; model before GLRT}, $\lvert\lvert \Pmat_{\Lambda_i}\Delta\zvec_{\mathcal{L}}\rvert\rvert^2$, cannot be associated with conventional states, and thus
provides an indication of the presence of an unobservable FDI attack in the subspace associated with $\Lambda_i$, $\Hmat_{\mathcal{L},\Lambda_i}\cvec_{\Lambda_i}$.

The proposed structural-constrained GIC algorithm for the identification of unobservable FDI attacks is provided in Algorithm \ref{alg; identification}
and is based on the GIC statistic in Proposition \ref{thm; GIC independent}.
Detection of unobservable FDI attacks is obtained as a by-product of the identification solution of Algorithm \ref{alg; identification}, $s$, 
where the proposed structural-constrained GIC-based detector decides that there is no attack if $s=0$
and that an unobservable FDI attack is present for the case where $s\ne0$.

\begin{algorithm}
	\caption{Structural-constrained GIC}
	\label{alg; identification}
	\begin{algorithmic}[1]
		\renewcommand{\algorithmicrequire}{\textbf{Input:}}
		\renewcommand{\algorithmicensure}{\textbf{Output:}}
		\REQUIRE  ~\\
		- difference-based measurements: $\Delta\zvec$ \\
		- network parameters: $\Hmat$, $\mathcal{L}$ \\
		- set of candidate state attack supports: $\mathcal{G}_{K_{\cvec}}$ \\
		- GIC penalty function: $\tau(\cdot, \cdot)$ 
		\ENSURE  selected hypothesis: $s$
		\FOR {$\Lambda_i\in\mathcal{G}_{K_{\cvec}}$ }
		\STATE evaluate $ \text{GIC}(\Lambda_i,\tau(\lvert\Lambda_i\rvert, \lvert\mathcal{L}\lvert) )$ from \eqref{eq; GIC} 
		\ENDFOR
		\RETURN $s= \arg \underset{i\in \{0,\ldots,\lvert\mathcal{G}_{K_{\cvec}}\rvert \}}{\max} ~~\text{GIC}(\Lambda_i,\tau(\lvert\Lambda_i\rvert, \lvert\mathcal{L}\lvert) )$
	\end{algorithmic} 
\end{algorithm}

The computational complexity of Algorithm \ref{alg; identification} is dominated by Step 2 in the \textbf{for} loop between Steps 1-3, in which 
the GIC statistic in \eqref{eq; GIC} is computed for each $\Lambda_i\in\mathcal{G}_{K_{\cvec}}$, 
where according to Assumption \ref{A; subsets}
\be \label{eq; computation}
\lvert \mathcal{G}_{K_{\cvec}} \rvert =\sum_{k=0}^{K_\cvec} \binom{\lvert \mathcal{V} \rvert} {k},
\ee
and $K_{\cvec}$ is the maximal sparsity level from \eqref{eq; G K C}. 
For any $ k=1,2,\ldots ,K_{\cvec},$ the binomial coefficient, $ {\binom{\lvert \mathcal{V} \rvert }{k}}$, grows by $O(\lvert \mathcal{V} \rvert^k)$ (p. 36, in \cite{elad2010sparse}). 
Thus, the number of times the GIC statistic is evaluated in Algorithm \ref{alg; identification}, which is the number of elements in \eqref{eq; computation}, is in the order of $O(\sum_{k=0}^{K_{\cvec}} \lvert \mathcal{V} \rvert^k )$.
The number of matrix-vector multiplications required for the computation of each GIC statistic in \eqref{eq; GIC} is $O(\lvert \mathcal{L}\rvert \lvert \Lambda_i \rvert^2 +(\lvert \Lambda_i \rvert+1)\lvert \mathcal{L}\rvert^2+\lvert\mathcal{L}\rvert )$.
Thus, for large-scale power systems, where $\lvert \mathcal{V}\rvert$ and $\lvert \mathcal{L}\rvert$ are significantly large compared with $K_{\cvec}$, 
the complexity is in the order of $O(\lvert \mathcal{V} \rvert^{K_{\cvec}}(K_{\cvec}+1) \lvert \mathcal{L}\rvert^2)$, and thus it may be infeasible to use the structural-constrained GIC method. 
In response to this issue, in Section \ref{sec; large scale}, we provide two low-complexity approximations.

\subsection{Performance analysis of the  GIC method}   \label{sec; performance}
In this section, we provide a theoretical  performance  analysis of the  GIC method.
In particular, we  derive:  1) the oracle GLRT detector \cite{elad2010sparse} that has access to the support of the sparse vector, $\Lambda_i\in\mathcal{G}_{k_{\cvec}}$; and
2) the clairvoyant detector  \cite{kayfundamentals}, which is a GLRT detector that has full access to the nuisance parameters.
While both the oracle and the clairvoyant detectors are infeasible since they are based on unavailable information, 
they provide insights insights with regard to the influence of the approximation error, $\epsilon_i \eta$, in the general case, where the support is unknown.  

In \cite{Stoica_Selen_Li}  it is shown that the GIC test provides the same decision rule as the generalized likelihood ratio test (GLRT) for a binary hypothesis testing problem with a fixed threshold. 
Hence, by using the GIC test in \eqref{eq; GIC}, the GLRT for the associated binary (modified) hypothesis
testing problem in \eqref{eq; model selection problem} with only $\Hnull$ and a single $\mathcal{H}_i$ is given by 
\be \label{eq; GLRT binary 2}
T^{GLRT|i}=\frac{1}{\sigma_{\evec}^2}	\lvert \lvert \Pmat_{\Lambda_i}   \Delta\zvec_{\mathcal{L}} \rvert\rvert^2\mathop{\gtrless}_{\Hnull}^{\mathcal{H}_i} \gamma^{GLRT},
\ee
where $\gamma^{GLRT}$ is the oracle detector threshold,
which is set to achieve a desired false alarm probability.
The following proposition states upper and lower bounds on the probability of false alarm, $P_{fa}$, in detecting unobservable FDI attacks with a given support, $\Lambda_i$, 
for the state attack vector, $\cvec$.  
\begin{prop}\label{prop; discussion}
	The probability of false alarm of the GLRT in \eqref{eq; GLRT binary 2} satisfies the following inequality:
	\be \label{eq; fa bound}
	\mathcal{Q}_{\frac{\lvert\Lambda_i\rvert}{2}}  \left( 0,\sqrt{\gamma^{GLRT} }\right)\le P_{fa}\le \mathcal{Q}_{\frac{\lvert\Lambda_i\rvert}{2}}  \left( \frac{\sqrt{\eta}}{\sigma_e},\sqrt{\gamma^{GLRT} }\right),
	\ee
	where $\mathcal{Q}_{v}(a,b)$ is the generalized Marcum $\mathcal{Q}$-function of order $v$ \cite{sun2010monotonicity}.
\end{prop}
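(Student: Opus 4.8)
The plan is to rewrite the false-alarm probability as a tail of the null distribution in \eqref{eq; chi}, re-express that tail through the generalized Marcum $\mathcal{Q}$-function, bound the non-centrality parameter that survives under $\mathcal{H}_0$ using Assumptions \ref{A; quasi}--\ref{A; loads}, and finish with the monotonicity of $\mathcal{Q}_v(a,b)$ in its first argument.

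Concretely, I would start from $P_{fa}=\Pr\!\left(T^{GLRT|i}>\gamma^{GLRT}\mid\mathcal{H}_0\right)$ with $T^{GLRT|i}$ the GLRT statistic in \eqref{eq; GLRT binary 2}, and invoke the null branch of \eqref{eq; chi}: $T^{GLRT|i}\sim\chi^2_{\lvert\Lambda_i\rvert}(\lambda_0)$ with $\lambda_0\define\lvert\lvert\Pmat_{\Lambda_i}\Hmat_{\mathcal{L},\mathcal{V}}\Delta\bt\rvert\rvert^2/\sigma_e^2$. Using the standard identity that the complementary CDF of $\chi^2_r(\lambda)$ at $x$ equals $\mathcal{Q}_{r/2}\!\left(\sqrt{\lambda},\sqrt{x}\right)$ (the non-central $\chi^2$--Marcum $\mathcal{Q}$ connection, see e.g. \cite{kayfundamentals}), this yields $P_{fa}=\mathcal{Q}_{\lvert\Lambda_i\rvert/2}\!\left(\sqrt{\lambda_0},\sqrt{\gamma^{GLRT}}\right)$. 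Next I would bound $\lambda_0$ uniformly over all admissible $\Delta\bt$: trivially $\lambda_0\ge 0$, while on the other side $\Pmat_{\Lambda_i}$ is an orthogonal projection, hence a contraction, so $\lvert\lvert\Pmat_{\Lambda_i}\Hmat_{\mathcal{L},\mathcal{V}}\Delta\bt\rvert\rvert^2\le\lvert\lvert\Hmat_{\mathcal{L},\mathcal{V}}\Delta\bt\rvert\rvert^2<\eta$ by \eqref{eq; quasi static} (equivalently, combine \eqref{eq; projection loads} with $\epsilon_i\le 1$, which holds since the left-hand side of \eqref{eq; embedding} is a squared norm). Hence $0\le\sqrt{\lambda_0}\le\sqrt{\eta}/\sigma_e$.

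The last step applies the monotonicity result of \cite{sun2010monotonicity}: $\mathcal{Q}_v(a,b)$ is increasing in $a$ for fixed $v>0$ and $b>0$. Sandwiching $a=\sqrt{\lambda_0}$ between $0$ and $\sqrt{\eta}/\sigma_e$, with $b=\sqrt{\gamma^{GLRT}}$, gives $\mathcal{Q}_{\lvert\Lambda_i\rvert/2}\!\left(0,\sqrt{\gamma^{GLRT}}\right)\le P_{fa}\le\mathcal{Q}_{\lvert\Lambda_i\rvert/2}\!\left(\sqrt{\eta}/\sigma_e,\sqrt{\gamma^{GLRT}}\right)$, which is \eqref{eq; fa bound}. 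None of the steps is deep; the only points that need care are matching the order/argument convention of the Marcum $\mathcal{Q}$-function so the CCDF identity holds with $v=\lvert\Lambda_i\rvert/2$, and making the bound on the non-centrality parameter hold for every admissible state change $\Delta\bt$ so that the two inequalities are genuine worst-case guarantees rather than realization-dependent.
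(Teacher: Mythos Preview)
Your proposal is correct and follows essentially the same route as the paper's proof in Appendix~\ref{App; discussion}: express $P_{fa}$ via the Marcum $\mathcal{Q}$-function using \eqref{eq; chi}, bound the non-centrality parameter by $0$ below and $\eta/\sigma_e^2$ above, and conclude by the monotonicity of $\mathcal{Q}_v(a,b)$ in $a$ from \cite{sun2010monotonicity}. The only cosmetic difference is that the paper obtains the upper bound by first writing $\lvert\lvert\Pmat_{\Lambda_i}\Hmat_{\mathcal{L},\mathcal{V}}\Delta\bt\rvert\rvert^2\le\epsilon_i\eta$ via \eqref{eq; projection loads} and then proving $0\le\epsilon_i\le1$ as a separate lemma, whereas you invoke the contraction property of orthogonal projections directly together with \eqref{eq; quasi static}; these are the same argument.
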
 
\begin{proof}
	The proof appears in Appendix \ref{App; discussion}.
\end{proof}

The upper and lower bounds in \eqref{eq; fa bound}  describe the influence of $\eta$  and $\gamma^{GLRT}$ on the probability of false alarm of the GLRT detector in \eqref{eq; GLRT binary 2} and 
are dictated by the property $0\le \epsilon_i \le 1$ shown in Appendix
\ref{App; discussion}. 
Since $\mathcal{Q}_{v}(a,b)$ decreases as $b$ decreases \cite{sun2010monotonicity},
we obtain the expected result that both the upper and lower bounds of $P_{fa}$ decrease as the threshold $\gamma^{GLRT}$ decreases. 
In contrast, since the function $\mathcal{Q}_{v}(a,b)$ increases as $a$ increases \cite{sun2010monotonicity}, 
the upper bound on the r.h.s. of \eqref{eq; fa bound} increases as $\eta$ increases.
This result indicates that when the bound on the nuisance parameter, $\Hmat\Delta\bt$, in \eqref{eq; quasi-static} is tighter, 
we can guarantee a tighter upper bound on the probability of false alarm, $P_{fa}$. Moreover,
under Assumption \ref{A; loads},
$\epsilon_i$ from \eqref{eq; approximation error} is  expected to be small for all $i=1,2,\ldots, \lvert\mathcal{G}_{K_{\cvec}}\rvert$. 
Thus, following the derivations in Appendix \ref{App; discussion}, the probability of false alarm is expected to be close to the lower bound in \eqref{eq; fa bound}. 
In particular,  the extreme case of ${\epsilon_i=0}$ occurs 
when the load changes are outside of the column space $\text{col}(\Hmat_{\mathcal{L},\Lambda_i})$, i.e.  $\Pmat_{\Lambda_i}\Hmat_{\mathcal{L},\mathcal{V}}\Delta\bt=\zerovec$.
For this case, the probability of false alarm achieves the lower bound in \eqref{eq; fa bound}, 
i.e. 
$$P_{fa}=\mathcal{Q}_{\frac{\lvert\Lambda_i\rvert}{2}}  \left( 0,\sqrt{\gamma^{GLRT}} \right), $$
and the probability of detection (see \eqref{eq; detect} in Appendix \ref{App; discussion}) is 
\be \label{eq; detect 0}
P_{d}=\mathcal{Q}_{\frac{\lvert\Lambda_i\rvert}{2}}   \left(\frac{\lvert\lvert \Hmat_{\mathcal{L},\Lambda_i}\cvec_{\Lambda_i}   \rvert\rvert}{\sigma_e},\sqrt{\gamma^{GLRT} }\right).
\ee
Hence, following the discussion on the properties of the generalized Marcum $\mathcal{Q}$ function in Appendix \ref{App; discussion}, the probability of detection in \eqref{eq; detect 0} increases as the attack energy, $\lvert\lvert \Hmat_{\mathcal{L},\Lambda_i}\cvec_{\Lambda_i}   \rvert\rvert^2$, increases.

In the following, we analyze the performance of the associated clairvoyant detector. 
The clairvoyant detector in the general case is the detector obtained by using the otherwise unknown signal parameter values of the composite hypothesis testing 
(see, e.g., Chapter 6.5, \cite{kayfundamentals} and \cite{zayyani2016double}). 
Therefore, the clairvoyant detector can serve as a reference for the achievable performance of practical detectors. 
In the considered model, the clairvoyant detector assumes perfect knowledge regarding the nuisance parameter, $\Hmat_{\mathcal{L},\mathcal{V}}\Delta\bt$.
Thus, similar to the GLRT in \eqref{eq; GLRT binary 2},
the clairvoyant GLRT 
is the GLRT detector for the binary (modified) hypothesis
testing problem in \eqref{eq; model selection problem} with only $\Hnull$ and a single $\mathcal{H}_i$, but now when $\Hmat\Delta\bt$ is assumed to be known. 
That is, the clairvoyant GLRT is given by
\be \label{eq; GLRT binary 3}
T^{GLRT|i}_{\normalfont{ cl }}=	\frac{1}{\sigma_{\evec}^2}	\lvert\lvert\Pmat_{\Lambda_i}(\Delta\zvec_{\mathcal{L}}-\Hmat_{\mathcal{L},\mathcal{V}} \Delta\bt)\rvert\rvert^2 \mathop{\gtrless}_{\Hnull}^{\mathcal{H}_i} \gamma^{\normalfont{ cl}},
\ee
where $\gamma^{\normalfont{cl }}$ is the clairvoyant detector threshold.
It should be noted that the clairvoyant detector is a theoretical detector that provides an upper bound on the performance of the probability of detection of practical detectors.

The following proposition presents a special case in which the proposed GLRT in \eqref{eq; GLRT binary 2} achieves the clairvoyant GLRT.
\begin{prop}\label{prop; oracle}
	If ${\epsilon_i=0}$,
	then the GLRT in \eqref{eq; GLRT binary 2} coincides with the clairvoyant GLRT in \eqref{eq; GLRT binary 3}.
\end{prop}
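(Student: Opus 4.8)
The plan is to show that when $\epsilon_i=0$ the two test statistics $T^{GLRT|i}$ in \eqref{eq; GLRT binary 2} and $T^{GLRT|i}_{or}$ in \eqref{eq; GLRT binary 3} are identical as functions of the data, so that with a common threshold they induce the same decision rule. The first step is to translate the condition $\epsilon_i=0$ into a statement about the projected nuisance term. By the definition of the approximation error in \eqref{eq; approximation error}, $\epsilon_i=0$ is equivalent to $\lvert\lvert \Pmat_{\Lambda_i}\Hmat_{\mathcal{L},\mathcal{V}}\Delta\bt \rvert\rvert^2=0$ (and in the degenerate case $\Hmat_{\mathcal{L},\mathcal{V}}\Delta\bt=\zerovec$, where the ratio in \eqref{eq; approximation error} is not defined, this quantity is trivially zero); since a Euclidean norm vanishes only at the zero vector, this is in turn equivalent to $\Pmat_{\Lambda_i}\Hmat_{\mathcal{L},\mathcal{V}}\Delta\bt=\zerovec$, which is exactly the constraint imposed under $\mathcal{H}_i$ in \eqref{eq; binary}.

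The second step is a direct substitution using the linearity of the orthogonal projection $\Pmat_{\Lambda_i}$:
$$\Pmat_{\Lambda_i}\big(\Delta\zvec_{\mathcal{L}}-\Hmat_{\mathcal{L},\mathcal{V}}\Delta\bt\big) = \Pmat_{\Lambda_i}\Delta\zvec_{\mathcal{L}} - \Pmat_{\Lambda_i}\Hmat_{\mathcal{L},\mathcal{V}}\Delta\bt = \Pmat_{\Lambda_i}\Delta\zvec_{\mathcal{L}},$$
where the last equality uses $\Pmat_{\Lambda_i}\Hmat_{\mathcal{L},\mathcal{V}}\Delta\bt=\zerovec$ from the first step. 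Taking squared Euclidean norms and dividing by $\sigma_{\evec}^2$ then gives $T^{GLRT|i}_{or}=T^{GLRT|i}$, and since both detectors compare this common statistic against a fixed threshold, choosing $\gamma^{or}=\gamma^{GLRT}$ makes the two GLRTs coincide.

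I expect no real analytical obstacle here; the proof is essentially the two lines above. The only point requiring a little care is the equivalence $\epsilon_i=0 \Leftrightarrow \Pmat_{\Lambda_i}\Hmat_{\mathcal{L},\mathcal{V}}\Delta\bt=\zerovec$, which relies on positive-definiteness of the norm together with a separate treatment of the case in which the denominator of \eqref{eq; approximation error} vanishes; I would state this equivalence explicitly (it was already noted informally in the discussion preceding the proposition) so that the substitution step is fully justified.
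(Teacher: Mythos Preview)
Your proof is correct and follows essentially the same approach as the paper: both argue that $\epsilon_i=0$ forces $\lVert \Pmat_{\Lambda_i}\Hmat_{\mathcal{L},\mathcal{V}}\Delta\bt\rVert^2=0$, hence $\Pmat_{\Lambda_i}\Hmat_{\mathcal{L},\mathcal{V}}\Delta\bt=\zerovec$, and then substitute into the oracle statistic to obtain the proposed one. Your version is slightly more careful in explicitly handling the degenerate case where the denominator of \eqref{eq; approximation error} vanishes and in spelling out the linearity step, but the underlying argument is identical.
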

\begin{proof}
	In Appendix \ref{App; discussion} it is shown that $\epsilon_i=0$ implies that $\lvert\lvert\Pmat_{\Lambda_i}\Hmat_{\mathcal{L},\mathcal{V}} \Delta\bt\rvert\rvert^2 =\zerovec$. 
	Hence, as a result of known norm properties we obtain that $\Pmat_{\Lambda_i}\Hmat_{\mathcal{L},\mathcal{V}} \Delta\bt=\zerovec$. 
	Thus, by substituting this result in the clairvoyant GLRT in \eqref{eq; GLRT binary 3} we obtain the proposed GLRT in \eqref{eq; GLRT binary 2}. 
\end{proof}

As a result of Proposition \ref{prop; oracle}, if $\epsilon_i=0$, then the  detection of the attack is not influenced by the presence of  the nuisance parameter, $\Hmat\Delta\bt$. 
In other words, the value of $\Hmat\Delta\bt$ does not affect the performance of the proposed GLRT detector in \eqref{eq; GLRT binary 2} for the observed case. 

Similar to the clairvoyant GLRT, we define the clairvoyant GIC as the GIC selection rule for the hypothesis testing in \eqref{eq; model selection problem the}
that assumes perfect knowledge of the nuisance parameter, $\Hmat_{\mathcal{L},\mathcal{V}}\Delta \bt$.
Therefore, similar to the derivation of \eqref{eq; GIC}, it can be shown that the clairvoyant GIC is given by
\be \label{eq; GIC oracle}
\begin{aligned}
	\text{\normalfont{GIC}}_{\normalfont{ cl }}(\Lambda_i,\tau(\lvert\Lambda_i\rvert, \lvert\mathcal{L}\lvert)  )&= \frac{1}{\sigma_{\evec}^2}\lvert\lvert \Pmat_{\Lambda_i}(\Delta\zvec_{\mathcal{L}} -\Hmat\Delta \bt) \rvert\rvert^2 \\ 
	&~~~~-\tau(\lvert\Lambda_i\rvert, \lvert\mathcal{L}\lvert)+const.
\end{aligned}
\ee
The performance of the clairvoyant GIC is used as a benchmark on the performance of the proposed methods, as demonstrated in the simulations.

\section{Low-complexity identification methods}
\label{sec; large scale}
As shown at the end of  Subsection \ref{sec; detection and identification}, the computational complexity of the structural-constrained GIC method 
makes it impractical for large power system networks, 
where $\lvert \mathcal{V}\rvert$ and $\lvert\mathcal{L}\rvert$ are large. 
In this section, we develop two low-complexity methods that rely on Assumptions \ref{A; diff}-\ref{A; loads}: 
1) an OMP-based method \cite{mallat1993matching};
and 2) a novel method that exploits the graph Markovian property of order two between the nodes (buses) in the graph representation of the power system \cite{sedghi2015statistical}. 

\subsection{OMP method}
\label{sec; OMP}
The OMP algorithm in \cite{mallat1993matching} is an efficient method for the recovery of sparse signals. 
The basic principle behind the OMP algorithm is to iteratively find the support set of the sparse vector.
The OMP method proceeds by finding the column of the CS matrix
that correlates most strongly with the signal residual.
The residual is constructed in each iteration by projecting the measurements onto the linear space spanned by the remaining columns that were not selected in previous iterations. 

In this subsection, we apply the OMP algorithm for the sparse recovery of the state attack vector, $\cvec$, 
which is a sparse signal as described in Assumption \ref{A; subsets}, from the measurements in \eqref{eq; 2}.
It should be noted that the measurement model in \eqref{eq; 2} contains a nuisance parameter vector, $\Hmat_{\mathcal{L},\mathcal{V}}\Delta\bt$,
which is not a part of the conventional sparse recovery model.
Nevertheless, similarly to in the derivations in Subsection \ref{sec; framework}, it can be shown that under Assumptions \ref{A; diff}-\ref{A; loads}
the nuisance parameter has a negligible effect on the OMP selection criterion.
Thus, the conventional OMP method is valid for this setting. 

The main iteration of the OMP algorithm, which is performed on the load measurements model from \eqref{eq; 2},
is given as follows. 
Suppose $\Lambda^{(j)}$ is the estimated support set of $\cvec$ in the $j$th iterative step.
In the $j+1$th iteration, we compute
\be \label{eq; main OMP}
k=\arg \underset{ \tilde{k}\in \mathcal{V}} {\max} ~~\lvert\lvert \Pmat_{\{  \tilde{k} \}  } \rvec^{\scriptstyle{\text{OMP}} |(j)}  \rvert\rvert^2,
\ee
where $\Pmat_{\tilde{k}}$ is obtained by 
replacing $\Lambda_i=\tilde{k}$ in \eqref{eq; projection matrix},
and 
\be \label{eq; signal residual 2}
\rvec^{\scriptstyle{\text{OMP} }|(j)}=\Pmat_{\Lambda^{(j)}}^{\bot}\Delta\zvec_{\mathcal{L}}
\ee
is the signal residual at the $j$th iteration.
By substituting \eqref{eq; signal residual 2} and $\Pmat_{\Lambda^{(j)}}^{\bot}= \Imat-\Pmat_{\Lambda^{(j)}}$ on the r.h.s. of \eqref{eq; main OMP} we obtain 
\be \label{eq; residual decompose 1}
\lvert\lvert \Pmat_{\{k\}} \rvec^{\scriptstyle{\text{OMP}}|(j)}  \rvert\rvert^2
= \lvert\lvert \Pmat_{\{k\}}\Delta\zvec_{\mathcal{L}} -\Pmat_{\{k\}}\Pmat_{\Lambda^{(j)}}\Delta\zvec_{\mathcal{L}} \rvert\rvert^2. \\
\ee
By limiting the number of iterations to the maximal sparsity level, $K_{\cvec}$, we obtain that $\Lambda^{(j)}\in\mathcal{G}_{K_{\cvec}}$.
In addition, any single bus, $k\in\mathcal{V}$, is an element in $\mathcal{G}_{K_{\cvec}}$, i.e. $k\in\mathcal{G}_{K_{\cvec}}$, as well. 
Therefore, under Assumptions \ref{A; diff}-\ref{A; loads}, we obtain from \eqref{eq; projection loads} that 
the nuisance parameter,  $\Hmat_{\mathcal{L},\mathcal{V}}\Delta\bt$, has a minor effect on both the terms $\Pmat_{\{k\}}\Delta\zvec_{\mathcal{L}}$ and 
$\Pmat_{\Lambda^{(j)}}\Delta\zvec_{\mathcal{L}}$. 
Consequently, the nuisance parameter has a minor effect on \eqref{eq; residual decompose 1} and, thus, its influence on the OMP selection procedure in \eqref{eq; main OMP} can be neglected.
The proposed structural-constrained OMP algorithm for the identification of unobservable FDI attacks is provided in Algorithm \ref{alg; OMP}.

\begin{algorithm}
	\caption{Structural-constrained OMP}
	\label{alg; OMP}
	\begin{algorithmic}[1]
		\renewcommand{\algorithmicrequire}{\textbf{Input:}}
		\renewcommand{\algorithmicensure}{\textbf{Output:}}
		\REQUIRE ~\\
		- difference-based measurements: $\Delta\zvec$ \\
		- network parameters: $\Hmat$, $\mathcal{V}$, $\mathcal{L}$ \\
		- maximal sparsity level: $K_{\cvec}$ \\
		- stopping rule threshold: $\gamma^{\scriptstyle{\text{OMP}}} $
		\ENSURE  estimated support: $\hat{\Lambda}^{\scriptstyle{\text{OMP}}}$ 
		\\ \textit{Initialize} : $\Lambda^{(0)}=\emptyset,~ \rvec^{\scriptstyle{\text{OMP}}|(0)} =\Delta\zvec_{\mathcal{L}}$ 
		\FOR {$j=1\ldots K_{\cvec}$}
		\STATE evaluate $k$ from \eqref{eq; main OMP}
		\IF {$\lvert\lvert \Pmat_{\{k\}}\rvec^{\scriptstyle{\text{OMP}}|(j-1)}\rvert\rvert^2<\gamma^{\scriptstyle{\text{OMP}}}$}
		\STATE \textbf{return} $\Lambda_s=\hat{\Lambda}^{(j-1)}$
		\ENDIF
		\STATE update: $\Lambda^{(j)}=\{\Lambda^{(j-1)}\cup k\}$ 
		\STATE update: $\rvec^{\scriptstyle{\text{OMP}}|(j)}=\Pmat_{\Lambda^{(j)}}^{\bot}\Delta\zvec_{\mathcal{L}}$ 
		\ENDFOR 
		\RETURN $\hat{\Lambda}^{\scriptstyle{\text{OMP}}}=\hat{\Lambda}^{(j)}$ 
	\end{algorithmic} 
\end{algorithm}

The computational complexity of Algorithm \ref{alg; OMP} is dominated by Step 3 (computing \eqref{eq; main OMP}), 
which requires $O(2\lvert\mathcal{L}\rvert^2+3\lvert \mathcal{L}\rvert)$ matrix-vector multiplications for each $k=1,2,\ldots,\lvert\mathcal{V}\rvert$.
The loop between Steps 2-8 is performed at most $K_{\cvec}$ times.
As a result, the total complexity of Algorithm \ref{alg; OMP}
is in the order of $O(\lvert\mathcal{V}\rvert K_c\lvert\mathcal{L}\rvert^2)$, 
which is significantly lower 
than the complexity of the GIC method (see discussion after \eqref{eq; computation}).

In general, the OMP method is used in a variety of applications due to its low computational complexity. 
However, the OMP method is a greedy algorithm with no optimal recovery guarantees,
and usually requires an incoherent dictionary in order to provide high performance \cite{Tropp_2004}.
In our case, the CS matrix is the topology matrix $\Hmat_{\mathcal{L},\mathcal{V}}$, 
which may be highly correlated, and thus, with large mutual coherence. 
In order to address this issue, in the following subsection, 
we develop a novel low-complexity method
that uses the power system graph representation Markovian properties.

\subsection{Graph Markovian GIC (GM-GIC)}
\label{sec; GM-GIC}

In this section, we develop the low complexity GM-GIC method 
for the measurement model in \eqref{eq; 2} and under Assumptions \ref{A; diff}-\ref{A; loads}, 
which considers the graphical representation of the power system. 
Accordingly, in this subsection the system buses are referred to as the graph nodes.
The power system graphical representation is utilized in Subsection \ref{sec; attack analysis}
to analyze the affect of an unobservable FDI attack on the measurements. 
Based on this analysis,
we derive the GM-GIC method, which consists of the following four stages:
1) low-scale pre-screening (Subsection \ref{sec; pre screening});  2) node partitioning (Subsection \ref{sec; node partition}); 
3) local GIC, on the partitioned subsets; 
and 4) sparsity correction.
The GM-GIC method, including Stages 3 and 4, is summarized in Subsection \ref{sec; GM-GIC small}.
Furthermore, in Subsection \ref{sec; GSP extensions} it is demonstrated that the GM-GIC method 
can be applied in GSP applications with sparse signals,
in addition to its application for FDI attack identification in power systems.

\subsubsection{Attack analysis} \label{sec; attack analysis}
In the following, we analyze the effect of an unobservable FDI attack on the system measurements. 
Specifically, we focus on how an attack, with a support $\Lambda$, affects the single-node measurement subspace, 
$\text{col}(\Hmat_{\mathcal{L},m})$, for any node $m\in\mathcal{V}$.
Considering the measurement model in \eqref{eq; 2}, the unobservable FDI attack can be linearly decomposed as follows:
\be \label{eq; GM attack}
\begin{aligned}
	\Hmat\cvec=	\Hmat_{\mathcal{L},\Lambda}\cvec_{\Lambda}=\sum\limits_{k\in\Lambda} \Hmat_{\mathcal{L},k}c_k.
\end{aligned}
\ee
Based on the decomposition in \eqref{eq; GM attack}, we can analyze the influence of an attack on a single-node measurement subspace 
by summing over the individual influences on each node, $k\in\Lambda$.
The following proposition evaluates the effect of a single-node attack on a single-node measurement subspace.
\begin{prop}\label{prop; single attack}
	The projection of a single-node attack on node $k\in\Lambda$, $\Hmat_{\mathcal{L},k}c_k$, 
	onto the measurement subspace associated with node $m\in\mathcal{V}$, $\text{col}(\Hmat_{\mathcal{L},m})$,
	satisfies
	\be \label{eq; 2nd 0}
	\Pmat_{\{m\}}\Hmat_{\mathcal{L},k}c_k=\boldsymbol{0},~~\forall k,m\in\mathcal{V},~d(k,m)>2
	\ee
	and
	\be \label{eq; all orders}
	\begin{aligned}
		\lvert\lvert\Pmat_{\{m\}}\Hmat_{\mathcal{L},k}c_k\rvert\rvert^2\le \lvert\lvert\Pmat_{\{k\}}\Hmat_{\mathcal{L},k}c_k\rvert\rvert^2,~~\forall k,m\in\mathcal{V},
	\end{aligned}
	\ee
	where $\Pmat_{\{m\}}$ is obtained by substituting $\Lambda=\{m\}$ in \eqref{eq; projection matrix}.
\end{prop}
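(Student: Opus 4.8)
The plan is to translate the graph-distance conditions into statements about the column supports of $\Hmat_{\mathcal{L},k}$ and $\Hmat_{\mathcal{L},m}$ and then exploit the structure of the DC topology matrix. Recall that a column of $\Hmat$ indexed by a bus $k$ has nonzero entries only in the rows corresponding to (i) the injection measurement at bus $k$, (ii) the injection measurements at the neighbors of $k$, and (iii) the flow measurements on lines incident to $k$. After restricting to load rows $\mathcal{L}$, the row-support of $\Hmat_{\mathcal{L},k}$ is therefore contained in the set of measurements attached to buses within graph distance $1$ of $k$. Consequently, $\Hmat_{\mathcal{L},k}$ and $\Hmat_{\mathcal{L},m}$ have disjoint row-supports whenever no measurement is shared, which happens precisely when $d(k,m)>2$ (two buses at distance $>2$ have no common neighbor and no common incident line, so the measurement sets attached to the distance-$1$ balls around $k$ and $m$ are disjoint). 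This gives $\Hmat_{\mathcal{L},m}^T\Hmat_{\mathcal{L},k}=0$, and hence $\Pmat_{\{m\}}\Hmat_{\mathcal{L},k} = \Hmat_{\mathcal{L},m}(\Hmat_{\mathcal{L},m}^T\Hmat_{\mathcal{L},m})^{-1}\Hmat_{\mathcal{L},m}^T\Hmat_{\mathcal{L},k} = \boldsymbol{0}$, which is exactly \eqref{eq; 2nd 0}.

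For \eqref{eq; all orders}, the key observation is that $\Pmat_{\{m\}}$ is an orthogonal projection, so it is idempotent and symmetric with operator norm $1$; therefore $\lvert\lvert\Pmat_{\{m\}}\vvec\rvert\rvert^2 \le \lvert\lvert\vvec\rvert\rvert^2$ for every vector $\vvec$. Applying this with $\vvec=\Hmat_{\mathcal{L},k}c_k$ gives $\lvert\lvert\Pmat_{\{m\}}\Hmat_{\mathcal{L},k}c_k\rvert\rvert^2 \le \lvert\lvert\Hmat_{\mathcal{L},k}c_k\rvert\rvert^2$. It then remains to note that when $m=k$ the bound on the right-hand side of \eqref{eq; all orders} is attained with equality, since $\Pmat_{\{k\}}\Hmat_{\mathcal{L},k}=\Hmat_{\mathcal{L},k}$ by \eqref{eq; projection matrix} (the column lies in its own column space), so $\lvert\lvert\Pmat_{\{k\}}\Hmat_{\mathcal{L},k}c_k\rvert\rvert^2 = \lvert\lvert\Hmat_{\mathcal{L},k}c_k\rvert\rvert^2$. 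Combining the two displays yields \eqref{eq; all orders} for all $k,m\in\mathcal{V}$.

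The only genuinely technical point — and the one I would be most careful about — is the claim that the row-support of $\Hmat_{\mathcal{L},k}$ is confined to measurements incident to the closed $1$-neighborhood of $k$, and the resulting disjointness criterion ``$d(k,m)>2 \Rightarrow$ no shared measurement row.'' This requires spelling out which physical quantities the rows of $\Hmat$ represent (injections \eqref{eq; injection} and flows \eqref{eq; flow}) and checking that an injection row at bus $n$ involves $\theta_n$ and $\theta_j$ for $j\sim n$, while a flow row on line $(n,k)$ involves only $\theta_n,\theta_k$; from this the distance bookkeeping follows. I would state this as a short lemma on the sparsity pattern of $\Hmat$ and then let \eqref{eq; 2nd 0} fall out immediately; everything after that is elementary projection-operator algebra. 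Note also that \eqref{eq; 2nd 0} makes \eqref{eq; all orders} trivially tight (both sides zero) in the regime $d(k,m)>2$, so the content of \eqref{eq; all orders} is really for the near-diagonal cases $d(k,m)\le 2$, where it is just the norm-nonincreasing property of $\Pmat_{\{m\}}$ compared against the exact reconstruction at $m=k$.
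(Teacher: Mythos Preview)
Your proof is correct and follows essentially the same route as the paper: establish the column orthogonality $\Hmat_{\mathcal{L},m}^T\Hmat_{\mathcal{L},k}=0$ for $d(k,m)>2$, plug it into the rank-one projector formula to get \eqref{eq; 2nd 0}, and then combine the contraction property $\lVert\Pmat_{\{m\}}\vvec\rVert\le\lVert\vvec\rVert$ with $\Pmat_{\{k\}}\Hmat_{\mathcal{L},k}=\Hmat_{\mathcal{L},k}$ to get \eqref{eq; all orders}. The only difference is that the paper imports the orthogonality statement from an external reference on the graph-Markov structure of the DC model, whereas you derive it directly from the row-support pattern of the Jacobian; your argument is more self-contained and in fact proves the slightly stronger fact that the supports are disjoint, not merely that the inner product vanishes.
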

\begin{proof}
	The proof appears in Appendix \ref{App; single attack}.
\end{proof}

This proposition demonstrates that the single-node attack obtains a `local' effect.
That is, a single-node attack on node $k$ does not affect the single-node measurement subspace of node $m$ if the 
geodesic distance (hop distance) \cite{sedghi2014conditional} between nodes $k$ and $m$, $d(k,m)$, is greater than two, $d(k,m)>2$. 
By multiplying \eqref{eq; GM attack} by $\Pmat_{\{m\}}$ and  substituting \eqref{eq; 2nd 0}, we obtain 
\be \label{eq; 0 is 2}
\Pmat_{\{m\}}\Hmat_{\mathcal{L},\Lambda}c_{\Lambda}=\zerovec,~~\forall k\in\Lambda,~d(k,m)>2. 
\ee 
As a result of \eqref{eq; 0 is 2}, the single-node measurement subspaces affected by the unobservable FDI attack are only 
those which are associated with nodes in the set 
\be \label{eq; A set}
\mathcal{A}\define\{m\in\mathcal{V}:~\exists k\in\Lambda~\text{s.t.}~0\le d(k,m)\le 2\},
\ee
where, according to \eqref{eq; A set}, the set $\mathcal {A}$ includes the attacked nodes, 
those in $\Lambda$, and first- or second-order neighbors of attacked nodes. 
This observation can be interpreted as a second-order graph Markov property \cite{sedghi2015statistical}. 

In general, the power system network generates a \textit{sparse} graph where nodes are connected to 2-5 neighbors \cite{wang2010generating}.
Thus, the nodes in the power system have at most 5 first-order and $25$ second-order neighbor nodes. 
As a result, the number of nodes affected by the attack is bounded:
\be \label{eq; A bound}
\lvert \mathcal{A}\rvert \le  (1+5+25) K_{\cvec},
\ee
where $K_{\cvec}$ is the sparsity term defined in Assumption \ref{A; subsets} and $\mathcal{A}$ is defined in \eqref{eq; A set}.
Moreover, the power system network diplays a local behavior for the connectivity pattern, where only substations geographically close are likely to be connected. 
Thus, the bound in \eqref{eq; A bound} is not a tight bound, and $\lvert \mathcal{A}\rvert $ is expected to be significantly lower than the r.h.s of \eqref{eq; A bound}.
In conclusion, it is implied that for large networks $\lvert\mathcal{A}\rvert\ll\lvert\mathcal{V}\rvert$.

\subsubsection{ Pre-screening} \label{sec; pre screening}
The first stage of the GM-GIC method is a pre-screening stage. 
In order to evaluate if the dictionary matrix column associated with node $m\in\mathcal{V}$, $\Hmat_{\mathcal{L},m}$, is correlated with the attack, $\Hmat\cvec$, 
we derive the GLRT (in a similar manner to in \eqref{eq; GLRT binary 2}) for the 
associated binary hypothesis testing problem in \eqref{eq; model selection problem} 
with only $\Hnull$ and a single $\mathcal{H}_i$, selected as $\mathcal {H}_i=\{m\}$, which results in
\be \label{eq; GLRT decision rule}
\lvert \lvert \Pmat_{\{m\}}   \Delta\zvec_{\mathcal{L}} \rvert\rvert^2  \mathop{\gtrless}_{\Hnull}^{\mathcal{H}_1} \rho, 
\ee
where $\rho$ is set to determine a desired false alarm rate. 
Applying the GLRT detector in \eqref{eq; GLRT decision rule} for any node in the system
yields the following set of suspicious nodes: 
\be  \label{eq; suspected}
\mathcal{S}\define\{m\in\mathcal{V}:~\lvert \lvert \Pmat_{\{m\}}   \Delta\zvec_{\mathcal{L}} \rvert\rvert^2  >\rho\}.
\ee
Thus, the node $m$ will be included in the set $\mathcal{S}$ if abnormal energy is detected in the single-node measurement subspace, $\text{col}(\Hmat_{\bcolon,m})$. 
By considering the hypothesis testing in \eqref{eq; model selection problem} in which $\Lambda_i=\{m\}$, the projection of the nuisance parameter, $\Pmat_{\{m\}}\Hmat_{\mathcal{L},\mathcal{V}}\Delta\bt$, is negligible. 
Therefore, the inclusion of the node $m$ in the set $\mathcal{S}$ indicates that the measurement subspace, $\text{col}(\Hmat_{\bcolon,m})$, 
is affected by the attack, i.e. that $\Pmat_{\{m\}}\Hmat_{\mathcal{L},\Lambda}\cvec_{\Lambda}\ne \zerovec$.
Thus, from \eqref{eq; 0 is 2}, the set of suspicious nodes in \eqref{eq; suspected}   
can be interpreted as an estimator of the set $\mathcal{A}$ in \eqref{eq; A set}.
From \eqref{eq; all orders}, it can be seen that the norm of the projected single-node attack on single-node measurement subspaces of attacked nodes 
is higher than the norm of the projection onto measurement subspaces associated with other nodes. 
Thus, considering the discussion after \eqref{eq; GM attack},  the effect of the attack, $\Hmat_{\mathcal{L},\Lambda}\cvec_{\Lambda}$, on attacked nodes is expected to be higher than 
its effect on their first- and second- order neighbor nodes.
Therefore, even if the estimator $\mathcal{S}$ does not cover all the nodes in $\mathcal{A}$, 
the attacked nodes are still expected to be included.

\subsubsection{Node partitioning} \label{sec; node partition}
The second stage of the GM-GIC method is the partitioning of the suspicious set, $\mathcal{S}$, in \eqref{eq; suspected}, 
into disjoint subsets.
This partitioning provides the condition that an attack on nodes located in one of the subsets does not affect the measurement subspaces 
associated with nodes in the other subsets. 
According to the proposed partitioning, 
a general set $\mathcal{S}$ is partitioned into $Q$ disjoint subsets, $\{\mathcal{S}_q\}_{q=1}^Q$, 
if the following is satisfied:
\be \label{eq; different subsets}
d(k,m)>2, ~\forall{ k\in\mathcal{S}_{q},m\in\mathcal{S}_{p}}
\ee
for any two different subsets, $S_q$ and $S_p$, $p\neq q$, selected from $\{\mathcal{S}_q\}_{q=1}^Q$,  where $1\leq Q \leq K_c$.
Thus, from \eqref{eq; 0 is 2} and \eqref{eq; different subsets}, an attack on any node subset in $\mathcal{S}_{q}$, $\Lambda_{q}\in\mathcal{S}_{q}$, 
does not affect the column space associated with the nodes in any of the other subsets, e.g. $\text{col}(\Hmat_{\mathcal{L},\mathcal{S}_{p}})$, i.e. 
\be \label{eq; no affect}
\Pmat_{\mathcal{S}_{p}}\Hmat_{\mathcal{L},\Lambda_{q}}=\zerovec.
\ee
Therefore, a node partitioning that satisfies \eqref{eq; different subsets} enables identification of the attacked buses in each subset separately,
and, thus, reduces the problem dimension. 
In contrast, the GIC from Algorithm \ref{alg; identification} considers the entire node set $\mathcal{V}$.

In practice, finding a node partitioning that satisfies \eqref{eq; different subsets} is performed as follows. 
First, based on the graph representation of the given power system, we generate the unweighted undirected graph $\mathcal{\tilde G}=(\mathcal{S}, \mathcal{E}_{\mathcal{\tilde G}}  )$, 
where $\mathcal{S}$ are the nodes and the set of edges is defined as
\be \label{eq; graph expand edges}
\mathcal{E}_{\mathcal{\tilde G}}\define \{(k,m):~k,m\in\mathcal{S},~ 1\le d(k,m) \le 2\},
\ee
in which $d(\cdot,\cdot)$ refers to the geodesic distance measured on the original graph that represents the power system network. 
Then, we find the connected components of $\tilde{\mathcal{G}}$ 
(e.g. by using the Matlab command conncomp). 
According to the definition in \eqref{eq; graph expand edges}, 
we obtain that $d(k,m)>2$ for any $k$ and $m$ that belong to different connected components of $\tilde{\mathcal{G}}$. 
Therefore, selecting the partition subsets $\{\mathcal{S}\}_{q=1}^Q$ to be the node sets of the connected components of $\tilde{G}$
satisfies \eqref{eq; different subsets}.

\subsubsection{Summary: GM-GIC method} \label{sec; GM-GIC small}
In this subsection we summarize the proposed GM-GIC method.
In the first stage of this method, a reduced set, composed of suspicious nodes, $\mathcal{S}$, is extracted from the node set $\mathcal{V}$, according to \eqref{eq; suspected}. 
In the second stage, the set $\mathcal{S}$ is partitioned into $Q$ disjoint subsets, $\{\mathcal{S}_q\}_{q=1}^Q$, as described in Subsection \ref{sec; node partition}.
In the third stage, the GIC method is applied on each subset separately, by replacing the set of candidate state attack supports,
$\mathcal{G}_{K_{\cvec}}$, in Algorithm \ref{alg; identification} with 
\be  \label{eq; G K C q}
\mathcal{G}_{K_{\cvec}}^{\hspace{0.05cm}\mathcal{S}_q}\define\{\Lambda\subset \mathcal{S}_q:~1\le \lvert \Lambda \rvert  \le K_{\cvec}  \}.
\ee
As a result, for each subset $\mathcal{S}_q$ we obtain a partial estimation of the support set, which is denoted by $\hat\Lambda_{q}$.
The total support set of the state attack vector is the union of all the partial estimates
\be \label{eq; joint estimated support}
\hat{\Lambda}^{\scriptstyle{\text{GM-GIC}}}_{\text{temp}}=\bigcup\limits_{q=1}^{Q}\hat{\Lambda}_q.
\ee

The estimated support in \eqref{eq; joint estimated support} may exceed the sparsity limit $\lvert \hat{\Lambda}^{\scriptstyle{\text{GM-GIC}}}_{\text{temp}}\rvert \le K_{\cvec}$.
Hence, $\hat{\Lambda}^{\scriptstyle{\text{GM-GIC}}}_{\text{temp}}$ may include node-elements which are not attacked.
It should be noted that in this stage, Assumption \ref{A; restricted} is relaxed and the sparsity restriction is only imposed on the partial (separated) estimates, i.e. $\Lambda_q\le K_{\cvec}$.
In order to reduce the identification errors that may be induced by this relaxation,
in the fourth stage, the estimated support in \eqref{eq; joint estimated support} is corrected to satisfy $\lvert\hat{\Lambda}^{\scriptstyle{\text{GM-GIC}}}\rvert\le K_{\cvec}$.
This stage is performed by: 1) evaluating the state attack ML estimation, $\hat{\cvec}^{\text{ML}|i}_{\hat{\Lambda}^{\scriptstyle{\text{GM-GIC}}}}$,
by replacing $\Lambda_i$ with \eqref{eq; joint estimated support} in \eqref{eq; estimated state attack vector 2} from Appendix \ref{App;  proof log likelihood}; 2) sorting $\hat{\cvec}^{\text{ML}|i}_{\hat{\Lambda}^{\scriptstyle{\text{GM-GIC}}}_{\text{temp}}}$ in a descending order:
\be \label{eq; I set}
\mathcal{I}\define \{i_1,i_2,\ldots, i_{\lvert\hat{\Lambda}^{\scriptstyle{\text{GM-GIC}}}\rvert} \}=\text{sort}(\hat{\cvec}^{\text{ML}|i}_{\hat{\Lambda}^{\scriptstyle{\text{GM-GIC}}}_{\text{temp}}});
\ee and 3) preserving only the $K_{\cvec}$ elements with the highest absolute values,
\be \label{eq; I select}
\hat{\Lambda}^{\scriptstyle{\text{GM-GIC}}}=\{i_1,i_2,\ldots, i_{K_{\cvec}}\}.
\ee
The proposed GM-GIC algorithm for the identification of unobservable FDI attacks is provided in Algorithm \ref{alg; large scale}.

\begin{algorithm}
	\caption{GM-GIC}
	\label{alg; large scale}
	\begin{algorithmic}[1]
		\renewcommand{\algorithmicrequire}{\textbf{Input:}}
		\renewcommand{\algorithmicensure}{\textbf{Output:}}
		\REQUIRE ~\\
		- difference-based measurements: $\Delta\zvec$ \\
		- network parameters: $\Hmat$, $\mathcal{V}$, $\mathcal{L}$ \\
		- maximal sparsity level: $K_{\cvec}$ \\
		- energy threshold: $\rho$ 
		\ENSURE estimated support: $\hat{\Lambda}^{\scriptstyle{\text{GM-GIC}}}$
		\STATE Pre-screening: evaluate $\mathcal{S}$ by \eqref{eq; suspected}
		\IF {$\mathcal{S}=\emptyset$}
		\STATE \textbf{return} $\hat{\Lambda}^{\scriptstyle{\text{GM-GIC}}}=\emptyset$  \\
		\ENDIF
		\STATE 	Node-partitioning: 1) generate $\mathcal{\tilde G}=(\mathcal{S}, \mathcal{E}_{\mathcal{\tilde G}}  )$ by computing \eqref{eq; graph expand edges}; 
		2) partition $\tilde{\mathcal{G}} $ into its connected components (e.g. by conncomp in Matlab); and 3) set $\{\mathcal{S}_q\}_{q=1}^{Q}$ to be these components 
		\FOR {$q=1\ldots Q$}
		\STATE generate the graph $\mathcal{G}_{K_{\cvec}}^{\hspace{0.05cm}\mathcal{S}_q}$ by \eqref{eq; G K C q} 
		\STATE compute $\hat{\Lambda}_q$ by applying Algorithm \ref{alg; identification} 
		with the input set of candidate state attack supports $\mathcal{G}_{K_{\cvec}}^{\hspace{0.05cm}\mathcal{S}_q}$
		\ENDFOR
		\STATE compute  $\hat{\Lambda}^{\scriptstyle{\text{GM-GIC}}}$ by \eqref{eq; joint estimated support} \\
		\IF {$\lvert\hat{\Lambda}^{\scriptstyle{\text{GM-GIC}}}\rvert>K_{\cvec}$}
		\STATE compute $\hat{\cvec}^{\text{ML}|i}_{\hat{\Lambda}^{\scriptstyle{\text{GM-GIC}}}}$ 
		by \eqref{eq; estimated state attack vector 2} 
		\STATE correct $\hat{\Lambda}^{\scriptstyle{\text{GM-GIC}}}$ by \eqref{eq; I set} and \eqref{eq; I select}
		\ENDIF
		\RETURN $\hat{\Lambda}^{\scriptstyle{\text{GM-GIC}}}$	
	\end{algorithmic} 
\end{algorithm}

The computational complexity of the different stages of the GM-GIC method in Algorithm \ref{alg; large scale} are as follows:
1) the pre-screening stage in Step 1 requires $O(\lvert \mathcal{V}\rvert(2\lvert \mathcal{L}\rvert^2 +3\lvert\mathcal{L}\rvert))$ matrix-vector multiplications;
2) the node partitioning stage in Step 3 is dominated by the graph partitioning, which is implemented by the conncomp($\cdot$) Matlab command and requires $O(\lvert \mathcal{S}\rvert+\lvert \mathcal{S}\rvert ^2)$ computations 
(considering that the number of edges in $\tilde{\mathcal{G}}$ cannot be anticipated in advance, this analysis is based on the worst case in which $\tilde{\mathcal{G}}$ is a fully connected graph); 
3) in Steps 4-7 the GIC method is applied on each of the subsets $\{\mathcal{S}_q\}_{q=1}^Q$, so that, 
from Subsection \ref{sec; detection and identification},
the complexity is in the order of 
$O((\sum_{q=1}^{Q} \lvert \mathcal{S}_q\rvert ^{K_{\cvec}})(K_{\cvec}+1)\lvert\mathcal{L}\rvert ^2 )$; and 
4) the sparsity correction stage, in Steps 9-12, is dominated by Step 10, which requires $O( K_{\cvec}(K_{\cvec}+1)\lvert\mathcal{L}\rvert)$ matrix-vector multiplications.
In power systems, the number of load nodes (buses) $\lvert \mathcal{L}\rvert$ is, in general, in scale with the number of nodes $\lvert \mathcal{V}\rvert$.
Thus, from Subsections \ref{sec; attack analysis}-\ref{sec; pre screening},  we obtain that $\lvert \mathcal{S}\rvert \ll\lvert \mathcal{L} \rvert$ and
the total computational complexity of Algorithm \ref{alg; large scale} is $O((\sum_{q=1}^{Q} \lvert \mathcal{S}_q\rvert ^{K_{\cvec}}) (K_{\cvec}+1)\lvert\mathcal{L}\rvert ^2 )$.

The computational complexity of the
different methods is summarized in Table \ref{t; complexity}.
It can be seen that the computational complexity of the GM-GIC method is significantly lower than that of the complexity of the GIC method,
but may be higher than that of the complexity of the OMP method. 
Finally, the worst complexity for the GM-GIC method is obtained when $\mathcal {S}$ cannot be partitioned.
In this case, the complexity is in the order of $O(\lvert \mathcal{S}\rvert ^{K_{\cvec}} (K_{\cvec}+1)\lvert\mathcal{L}\rvert ^2 )$, 
which is still significantly lower than the computational complexity of the GIC method.

In contrast to the GIC and the OMP methods, the computational complexity of the GM-GIC method also depends on the underlying structure of the power system (and not only the dimensions), 
which is represented by the topology matrix $\Hmat$. 
In particular, the results of the pre-screening and node partitioning stages, $\mathcal{S}$ and $\{\mathcal{S}_q\}_q$, which are detailed in Subsections \ref{sec; pre screening} and \ref{sec; node partition}, respectively, 
depend on the formation of the topology matrix $\Hmat$.  
The method will perform effectively when the properties detailed at the end of Subsection \ref{sec; attack analysis} are satisfied.

\begin{table}
	\begin{center}
		\begin{tabular}{ |c|c| } 
			\hline
			GIC & $O(\lvert \mathcal{V} \rvert^{K_{\cvec}}(K_{\cvec}+1) \lvert \mathcal{L}\rvert^2)$  \\
			\hline
			OMP & $O(\lvert\mathcal{V}\rvert K_c\lvert\mathcal{L}\rvert^2)$   \\
			\hline
			GM-GIC  & $O((\sum_{q=1}^{Q} \lvert \mathcal{S}_q\rvert ^{K_{\cvec}}) (K_{\cvec}+1)\lvert\mathcal{L}\rvert ^2 )$  \\
			\hline
			GM-GIC (worst case)   & $O(\lvert \mathcal{S}\rvert ^{K_{\cvec}} (K_{\cvec}+1)\lvert\mathcal{L}\rvert ^2 )$  \\
			\hline
		\end{tabular}
	\end{center}
	\caption{ Computational complexity of the proposed  methods.}
	\label{t; complexity}
\end{table}

\subsubsection{Sparse signal recovery in general GSP applications} 
\label{sec; GSP extensions}

The proposed GM-GIC heuristic is designed to exploit the graphical properties of power system networks, which are detailted in Subsection \ref{sec; attack analysis}
as part of the attack analysis, and include: 
1) graph Markovity; 2) graph sparsity; and 3) local graph connectivity behavior. 
Thus, the GM-GIC method can be utilized for sparse recovery in GSP frameworks,
where, if Properties 1)-3) are satisfied, the GM-GIC method has advantages, 
compared with conventional sparse recovery algorithms, 
in terms of computational complexity and accuracy of the sparse recovery. 

The emerging field of GSP provides new methodologies for the analysis of signals in applications 
with underlying relations that could be modeled by a graph \cite{shuman2013emerging,sandryhaila2013discrete,ortega2018graph}.
In particular, the propagation of a process that originates from a sparse input through the graph vertex domain is commonly modeled in the GSP literature as an output of a graph filter  \cite{segarra2016blind,xu2011compressive,ramirez2021graph}; this approach has various applications, such as locating the source of a disease \cite{newman2002spread,pinto2012locating} or monitoring anomalies in sensor networks \cite{schizas2009distributed}.  
The proposed GM-GIC method can be applied to this problem of sparse recovery in GSP models as follows.
Let us assume the graph $\mathcal{G}=(\mathcal{V},\mathcal{E})$ with the set of nodes (vertices) $\mathcal{V}$ and the set of edges $\mathcal{E}$. 
We consider the recovery of a sparse graph signal, ${\xvec \in \mathbb{R}^{\lvert \mathcal{V} \rvert }}$,  s.t. $\lvert\lvert \xvec \rvert\rvert _0 \ll\lvert \mathcal{V} \rvert$,
from the noisy output of a graph filter, $\Fmat\in\mathbb{R}^{\vert \mathcal{V}\rvert\times \lvert \mathcal{V} \rvert}$:
\be \label{eq; general}
\yvec=\Fmat\xvec+\evec, 
\ee
where $\evec\sim \mathcal{N}(\zerovec,\sigma^2_n)$ is the system noise,
and the graph filter $\Fmat$ is linear and shift-invariant \cite{shuman2013emerging,sandryhaila2013discrete,ortega2018graph,segarra2016blind}.     
Hence, $\Fmat$ is a polynomial in a graph shift operator (GSO), $\Smat\in\mathbb{R}^{\vert \mathcal{V}\rvert\times \lvert \mathcal{V} \rvert}$,
as follows \cite{shuman2013emerging,sandryhaila2013discrete,ortega2018graph}:
\be \label{eq; graph filter}
\Fmat=h_0\Imat+h_1\Smat+\ldots +h_{\Psi}\Smat^{\Psi},
\ee
where $h_0,\ldots,h_{\Psi}$ are the filter's coefficients and $\Psi\geq 1$ is the filter's order.
The GSO, $\Smat$, is defined as a matrix that satisfies
\be \label{eq; GSO}
S_{k,m}=0, ~ \text{if}~d(k,m)>1,
\ee 
$\forall k,m\in\mathcal{V}$, where $d(k,m)$ is the geodesic distance between nodes $k$ and $m$. 
In particular, it can be verified that \eqref{eq; GSO} implies
\be \label{eq; F prop}
F_{n,k}=0, ~\text{if}~ d(n,k)>\Psi, 
\ee
$\forall k,m\in\mathcal{V}$, which leads to 
\be \label{eq; F prop 2}
\Fmat_{\mathcal{V},m}^T\Fmat_{\mathcal{V},k}=0,~~\forall k,m\in\mathcal{V},~d(k,m)>2\Psi. 
\ee
It can be seen that the problem in \eqref{eq; model selection problem} is reduced to the problem in \eqref{eq; general} 
when the topology matrix $\Hmat$ is replaced with the graph filter $\Fmat$ and when the nuisance parameter is absent, $\Delta\bt=\zerovec$.

It can be seen that for a graph filter of order $\Psi=1$, \eqref{eq; F prop 2} is identical to the property in \eqref{eq; App 0}-\eqref{eq; App 1}
from Appendix \ref{App; single attack}, where in this case $\Fmat=\Hmat_{\mathcal{V},\mathcal{L}}$. 
The property in \eqref{eq; App 0}-\eqref{eq; App 1} is the basis for Proposition \ref{prop; single attack} that enables the derivation of the GM-GIC method. 
Thus, for this case, the GM-GIC method can be implemented (without any change) on \eqref{eq; general} for the recovery of the sparse graph signal, $\xvec$, 
which replaces the state attack vector, $\cvec$, and without nuisance parameters. 
For the general case, where $\Psi>1$, the GM-GIC method can be applied on \eqref{eq; general} after replacing the constraint in \eqref{eq; graph expand edges}
(that is part of the node partitioning stage in Algorithm \ref{alg; large scale}) with the constraint $1\le d(n,k)\le 2\Psi$. 
For this case, in which \eqref{eq; F prop 2} is considered instead of \eqref{eq; App 1}, the condition in \eqref{eq; 2nd 0} from Proposition \ref{prop; single attack} is changed to $d(k,m) >  2\Psi$.

\section{Simulations}
\label{sec; Simulation study}
In this section, the performance of the proposed methods is demonstrated for the tasks of detection and identification of unobservable FDI attacks.
The simulations are conducted on the IEEE-30 bus test case,
where the topology matrix and measurement data are extracted using the Matpower toolbox for Matlab \cite{matpower}.
The simulation set-up is described in Subsection \ref{sec; sim set up},
while the proposed methods' performance is demonstrated in Subsection \ref{sec; sim small scale}.

\subsection{Set-up}
\label{sec; sim set up}

\subsubsection{Measurements}
The simulation study is conducted on the difference-based model in \eqref{eq; 1}.
For the sake of simplicity of implementation, 
we ensure Assumption \ref{A; restricted}
by defining the set $\tilde{\mathcal{V}}$ to include only state variables that are related to load measurements,
and then restricting the support of the state attack vector to $\Lambda\subseteq\tilde{\mathcal{V}}$. 
In particular, for the IEEE-30 bus test case it can be verified that ${\tilde{\mathcal{V}}=\{14,16,17,18,19,20\}}$.
In each simulation, we set the cardinality of the state attack support vector to be $\lvert\Lambda\rvert=K_{\avec}$ and draw the support vector, $\Lambda$,
uniformly from the set $\{\Lambda\in\tilde{\mathcal{V}}:~ \lvert\Lambda\rvert=K_{\avec}\}$.
The attack values on the chosen support are randomly drawn from the uniform distribution over the interval $[-1,1]$.
Then, the values are scaled to obtain a desired value of attack norm, $\lvert\lvert \avec\rvert\rvert$. 

The difference-based state vector is obtained by first generating the state vectors at times $t$ and $t+1$, $\bt_t$ and $\bt_{t+1}$, respectively, 
and then subtracting $\bt_t$ from $\bt_{t+1}$ to obtain $\Delta\bt=\bt_{t+1}-\bt_{t}$. 
The state vector at time $t$, $\bt_{t}$, is set to the values in the IEEE test case \cite{matpower}.
Based on $\bt_t$ and considering Assumption \ref{A; loads}, 
the state vector at time $t+1$, $\bt_{t+1}$, is simulated 
by first updating the load measurements with random scaling:
\be \label{eq; load t+1}
\phi\Hmat_{\mathcal{L},\mathcal{V}}\bt_{t},~\phi\sim\mathcal{N}(\boldsymbol{1},\sigma^2_s\Imat),
\ee
and then computing $\bt_{t+1}$ by the rundcpf($\cdot$) Matpower command in \cite{matpower}.
In addition, throughout the simulations the noise is modeled as in \eqref{eq; FDIA model} with $\sigma_e^2=0.01$.

\subsubsection{Methods}
The proposed methods include the structural-constrained GIC method from Section \ref{sec; structural model}, 
and the low-complexity methods in Section \ref{sec; large scale}: the structural-constrained OMP method and the GM-GIC method. 
The penalty function for the GIC and GM-GIC methods defined in \eqref{eq; GIC} is set by
\be \label{eq; tuning}
\tau(\lvert \Lambda_i \rvert, \mathcal{L})= \zeta\lvert\Lambda_i\rvert + \gamma^{\text{GIC}}\delta[\lvert\Lambda_i\rvert],
\ee  
where $\zeta$ and $\gamma^{\text{GIC}}$ are user-defined regularization parameters and $\delta[\cdot]$ is the Kronecker delta function.
The term $\zeta\lvert\Lambda_i\rvert$ is set to encourage sparse solutions for the identification problem,
while the term $\gamma^{\text{GIC}}\delta(\lvert\Lambda_i\rvert)$ is set to maintain a desired false alarm rate for the detection problem. 
In particular, the probability of false alarm is set to $P_{FA}=0.05$ unless stated otherwise.
Furthermore, by selecting the AIC in \eqref{eq; MDL AIC}, the GIC tuning parameter in \eqref{eq; tuning} is set to $\zeta=2$.
The maximal sparsity rate is set to $K_{\cvec}=6$.

The performance of the proposed methods is compared with the following methods that were all modified according to the difference-based model in \eqref{eq; 1}:
\begin{enumerate}[label=\textbf{M.\arabic*}, leftmargin=1cm]
	\item \label{M; l2}
	The sparse optimization technique in \cite{gao2016identification}, which considers an $\ell_2$ relaxation
	for the attack sparsity restriction (denoted as $\ell_2$). 
	\item \label{M; l1} The sparse optimization technique in \cite{liu2014detecting}, which considers an $\ell_1$ relaxation
	for the attack sparsity restriction (denoted as $\ell_1$).
	\item \label{M; GFT} The GFT-based detector in \cite{drayer2018detection} (denoted as GFT).
	\item \label{M; ENG} The energy detector, inspired by the detector in \cite{jiang2017defense}, which is obtained by comparing  $T^{\text{ENG}}=\lvert \lvert \Delta\zvec\rvert \rvert^2$ 
	with a chosen threshold (denoted as ENG).
	\item \label{M; BDD} The BDD detector in \eqref{eq; BDD} (denoted as BDD).
\end{enumerate}
All numerical results in this section were obtained using at least 500 Monte Carlo simulations. 
The detection thresholds were computed from simulated historic data obtained by 500 off-line simulations of \eqref{eq; 1}
under the null hypothesis. 
The simulations where conducted using Matlab on an Intel(R) Xeon(R) CPU E5-2660 v4@ 2.00 GHz. 

\subsection{Simulations}
\label{sec; sim small scale}
\subsubsection{Detection} 
In Fig. \ref{fig; ROC}. the receiver operating characteristics (ROC)
curves of the proposed methods: GIC, GM-GIC, and OMP, 
are presented and compared with those of the methods \ref{M; l2}-\ref{M; BDD}. 
The simulations were conducted on the IEEE-30 bus test case with an attack on four state variables, $K_a=4$, 
where the attack norm is normalized to $\lvert\lvert\avec\rvert\rvert=0.05 K_{\avec}$. 
The loads scaling variation in \eqref{eq; load t+1} is set to $\sigma_{s}^2=0.05$.
This figure shows that the detection performance of the proposed low-complexity methods, GM-GIC, and OMP, converges to the performance of the optimal GIC method.
Furthermore, the proposed GIC, GM-GIC, and OMP methods provide a higher detection probability for any chosen false alarm rate, when compared to the previous methods. 
It should be noted that the sparse $\ell_2$ and $\ell_1$ methods were developed based on models with multiple time measurements
and where the sparsity pattern of the attack is defined over both the time and the bus domains. 
This is in contrast with the considered model, in which only two time samples are provided and where the attack is only sparse in the bus domain.
Finally, as expected from Subsection \ref{sec; sub; unobservable FDI attacks}, the BDD method cannot detect unobservable FDI attacks and is no better than flipping a coin.

\begin{figure}[t]
	\centering
	\includegraphics[width=7cm]{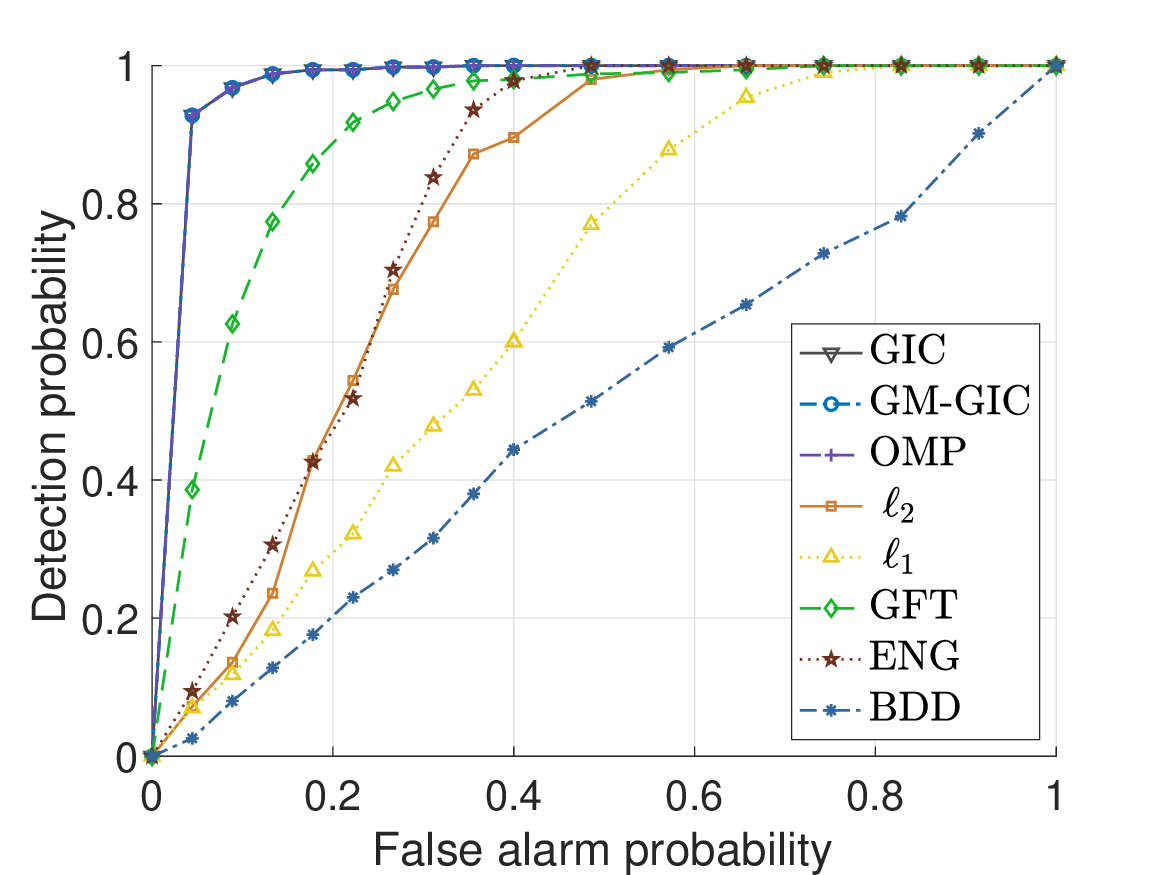}
	\caption{ Unobservable FDI attack detection: The ROC curve of the different methods 
		for
		$K_{\avec}=4$, ${\frac{\lvert\lvert\avec\rvert\rvert}{K_{\avec}}=0.05}$, and $\sigma_s^2=0.05$.}
	\label{fig; ROC}
\end{figure}

\subsubsection{Identification}
The identification performance 
is measured by the capability to classify each state variable as manipulated or not. 
Therefore, we evaluate the identification performance by the F-score classification metric \cite{sokolova}:
\be \label{eq; F score}
FS(\hat{\Lambda},\Lambda)=\frac{2t_p}{2t_p+f_n+f_p},
\ee
where $\Lambda$ is the true support of the state attack vector, $\cvec$, 
and $\hat{\Lambda}$ is the estimated support by a given method.
The terms $t_p$, $f_p$, and $f_n$, are the true-positive, false-positive, and false-negative probabilities, respectively.
The F-score metric takes values between $0$ and $1$, where  $1$
means perfect identification.

In Fig. \ref{fig; identification}, the identification performance, measured by the F-score metric, of the GIC, GM-GIC, OMP, $\ell_2$, and $\ell_1$ methods 
is examined w.r.t the attack characteristics.  
Methods \ref{M; GFT}-\ref{M; BDD} are not included in the comparison since they provide solely detection and do not identify the attacked buses' locations.
In Fig. \ref{fig; identification}.a the F-score is presented versus the normalized attack norm, $\frac{\lvert\lvert\avec\rvert\rvert}{K_{\avec}}$, for $K_{\avec}=4$ and $\sigma_s^2=0.1$.
In addition, in Fig. \ref{fig; identification}.b the F-score is presented versus the number of attacked elements, $K_{\avec}$, for $\lvert\lvert \avec\rvert\rvert=1.2$ and $\sigma_s^2=0.05$.
It can be observed from both figures that the proposed methods have a significantly higher F-score than those of the previous sparsity-based methods, $\ell_2$ and $\ell_1$. 
In addition, it can be seen from Fig. \ref{fig; identification}.a that the performance of all the methods
improves with the increase of the attack measured by $\frac{\lvert \lvert a\rvert \rvert }{K_{\avec}}$. 
Similarly, Fig. \ref{fig; identification}.b shows that the performance of all the methods degrades as $K_{\avec}$ increases.
Nonetheless, the F-score of the proposed methods is above $0.8$
even when a considerable portion ($>0.2$) of the system is attacked. 
Finally, it can be seen that both the GM-GIC and OMP methods
show relatively close results to those of the GIC method, 
where the GM-GIC method provides higher identification rates than the OMP method. 

\begin{figure}[t]
	\centering
	\subfigure[]{
		\includegraphics[width=7cm]{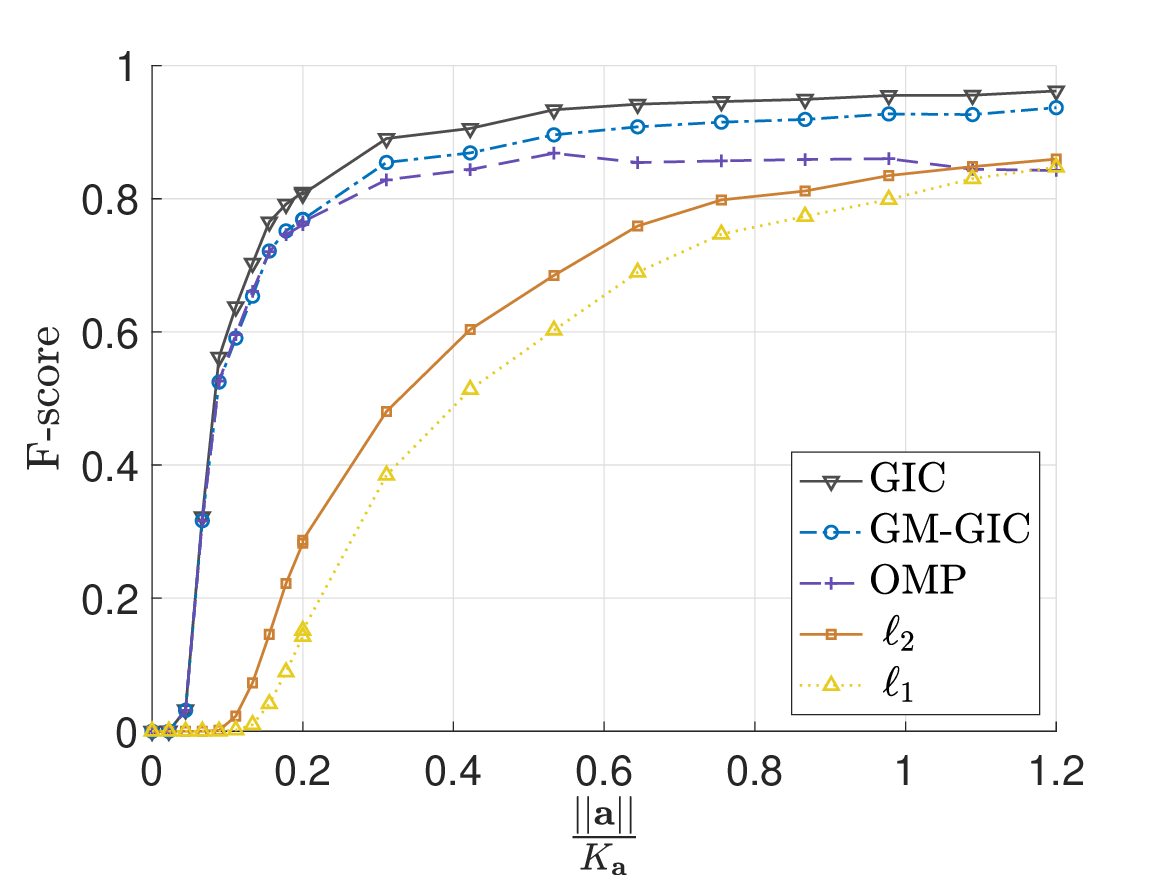}
		\label{fig; T_30_FS}
	}
	\subfigure[]{	
		\includegraphics[width=7cm]{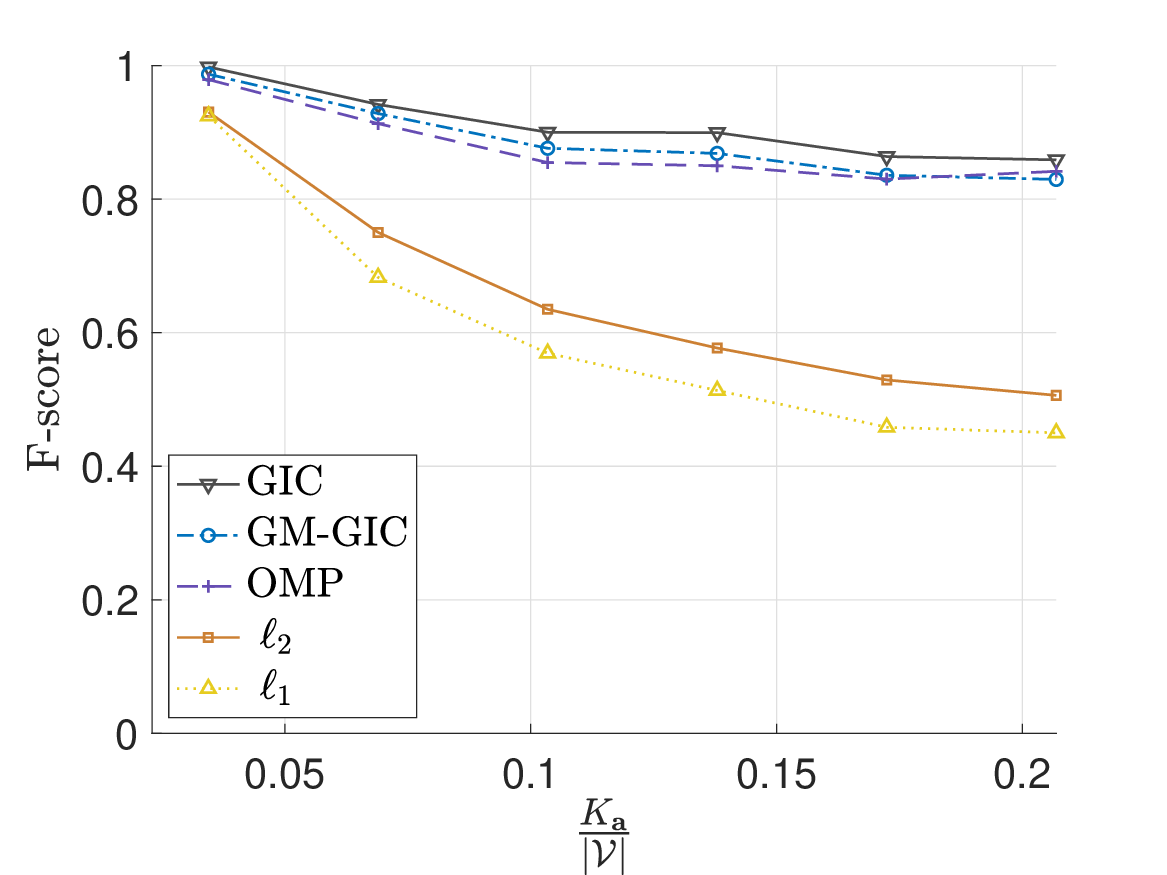}
		\label{fig; FS over Ka}
	}
	\caption{ Unobservable FDI attack identification: (a) The F-score of the different methods
		versus the normalized attack, $\frac{\lvert \lvert a\rvert \rvert}{K_{\avec}}$, where $K_{\avec}=4$ and $\sigma_s^2=0.1$; (b) 
		versus the ratio of attacked elements, $\frac{K_{\avec}}{\mathcal{V}}$, for $\lvert\lvert\avec\rvert\rvert=1.2$ and  $\sigma_s^2=0.05$.
	}
	\label{fig; identification}
\end{figure}

\subsubsection{Identification under mismatch}
The use of the GIC approach for identifying and detecting unobservable FDI attacks
is based on the assumption that the nuisance parameter  $\Hmat\Delta\bt$ is bounded and small, as described in \eqref{eq; quasi-static}. 
In Fig. \ref{fig; sigma}, we illustrate the robustness of the performance to this assumption by evaluating the influence of the norm of the nuisance parameter, $\lvert \lvert\Hmat\Delta\bt\rvert\rvert$, on the identification performance of the different methods. 
To this end, we consider the worst case, in which the equality in \eqref{eq; quasi-static} holds, i.e. $\lvert\lvert\Hmat_{\mathcal{L},\mathcal{V}}\Delta\bt\rvert\rvert^2=\eta$.
In particular, the F-score of the different methods, as well as of the clairvoyant GIC from \eqref{eq; GIC oracle}, is compared for different rates of 
$\eta$ for $K_{\avec}=4$ and $\lvert\lvert\avec\rvert\rvert=0.05 K_{\avec}$.

Fig. 4 shows that the F-score of the different methods, excluding the clairvoyant GIC from \eqref{eq; GIC oracle}, decreases as $\eta$ increases, due to the model mismatch. 
The F-score of the clairvoyant GIC is independent of $\eta$ since it uses the true value of $\Delta\bt$ and is not based on the approximated model, defined by $\eta$. 
It can be seen that this degradation in the identification performance 
is significantly greater for the sparsity-based methods, $\ell_2$ and $\ell_1$, than for the proposed methods.
Therefore, the proposed methods are more robust to a mismatched model
in comparison to the existing methods.
Finally, for small values of $\eta$ the proposed GIC method converges to the clairvoyant GIC. 

\begin{figure}[t]
	\centering
	\includegraphics[width=7cm]{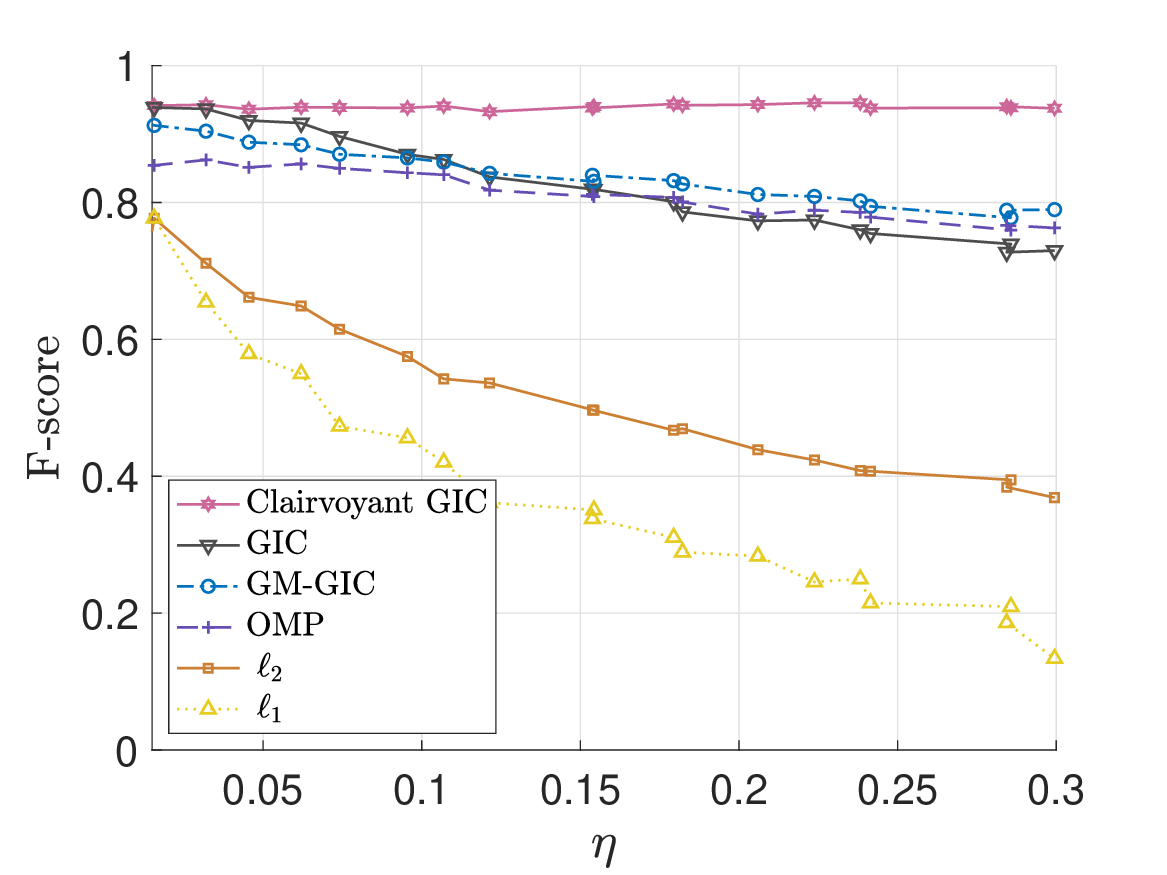}
	\caption{Unobservable FDI attack identification: The F-score of the different methods with $K_{\avec}=4$ and $\lvert\lvert\avec\rvert\rvert=0.05 K_{\avec}$.
	}
	\label{fig; sigma}
\end{figure}

\subsubsection{Run-time}
\label{sec; sim time}
In Fig. \ref{fig; run-time}, the averaged run-time of the identification methods is presented versus the ratio of node elements attacked, $\frac{K_{\avec}}{\lvert\mathcal{V}\rvert}$, 
for $\sigma_s^2=0.05$ and ${\lvert\lvert\avec\rvert\rvert=1.2}$.
It can be seen that the GIC method requires a run-time
which is significantly higher than those of the other methods. 
This makes the GIC method impractical for large networks. 
The  CS methods \cite{gao2016identification,liu2014detecting} have a higher averaged run-time than the proposed low-complexity methods, GM-GIC and OMP. 
As expected from the discussion in Subsection \ref{sec; GM-GIC small}, the GM-GIC method requires a significantly lower run-time 
than the GIC method, but  a higher run-time than the OMP method. 

\begin{figure}[hbt]
	\centering
	\includegraphics[width=7cm]{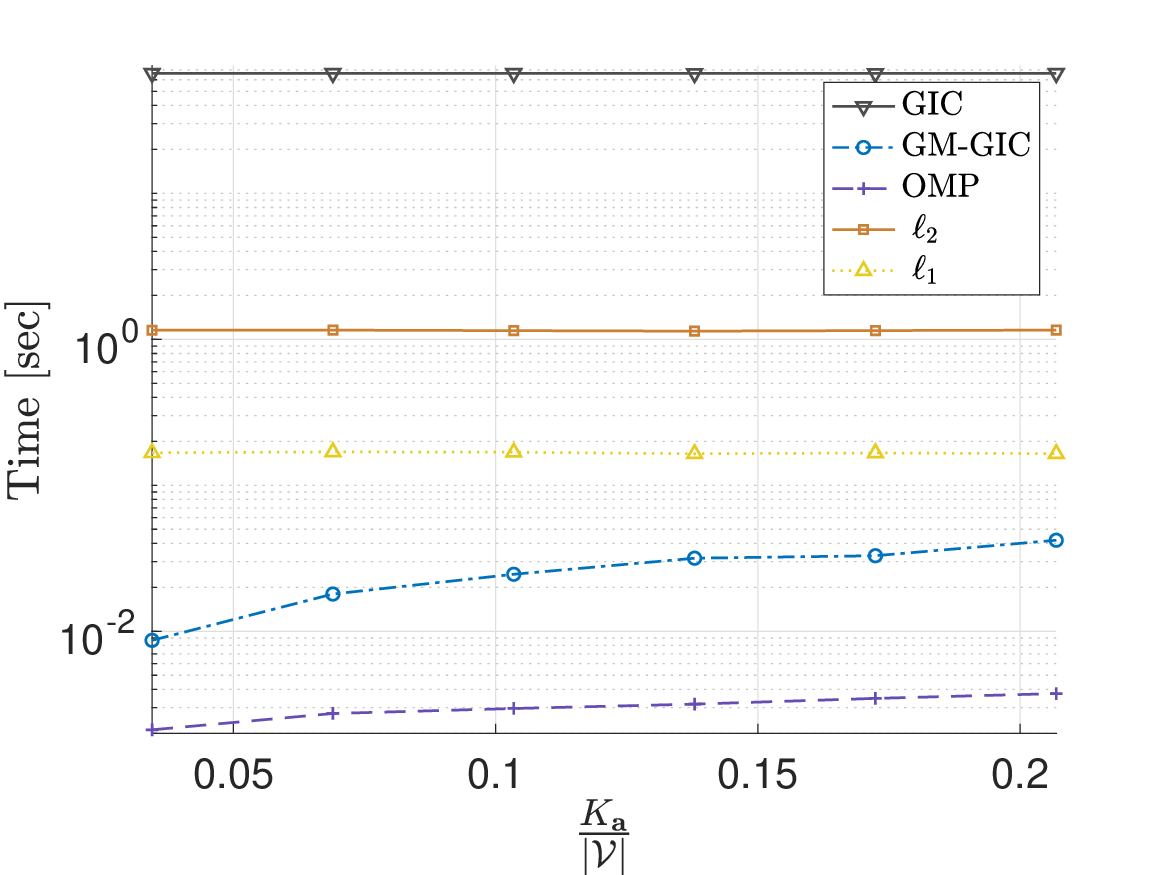}
	\caption{ Unobservable FDI attack identification: Run-time of the different methods versus the ratio of attacked elements, $\frac{K_{\avec}}{\mathcal{V}}$,
		for $\lvert\lvert\avec\rvert\rvert=1.2$ and $\sigma_s^2=0.05$.}
	\label{fig; run-time}
\end{figure}

\section{Conclusions}
\label{sec; Conclusions}
In this paper, we introduce novel methods for the identification of unobservable FDI attacks in power systems.
Unlike classical BDD methods that depend on measurement residuals and fail to detect unobservable FDI attacks, 
the proposed methods successfully detect and identify these attacks by utilizing power system intrinsic structural information. 
Specifically, the proposed methods are based on defining structural constraints on both the attack and the typical load changes,
and then formulating the identification problem as a model selection problem.
We develop the GIC selection rule for the identification task, as well as two low-complexity methods: 
1) the structural-constrained OMP method, which is a modification to the standard OMP method that accounts for the proposed structural and sparse constraints; and
2) the novel GM-GIC method that exploits the graph sparsity, graph Markovity, and the local behavior of graph connectivity in power systems. 
As by-products of the identification process, the proposed methods also enable the detection of unobservable FDI attacks 
and PSSE in the presence of these attacks. 
The low-complexity methods also enable fast implementation that can be integrated into an adaptive scheme that performs detection continuously to see if an attack is present at any given time.
In addition, we show the relations between the assumed problem and the problems of sparse recovery of graph signals and MSD. 
In particular, the proposed GM-GIC method can be applied for denoising an output of a general low-order GSP filter with a sparse input.   
This problem arises in a variety of GSP applications, such as epidemiology and anomaly detection in sensor networks.
In addition, performance analysis of the proposed approach is investigated by introducing a clairvoyant and an oracle detector of the considered FDI detection problem.
Our numerical simulations show that the proposed methods outperform existing methods for detection and identification, 
and are robust to the model assumptions. 
Moreover, the GM-GIC method, that integrates the structural information regarding the underling graph behind the data, 
obtains better performance than that of the OMP method and the existing sparsity-based methods.
The OMP and GM-GIC methods require  shorter run-time than the GIC method. 
Finally, the performance of the proposed approaches is compared with that of derived clairvoyant and oracle detectors. 

\begin{appendices}	
	\section{GIC statistic}
	\label{App; proof log likelihood}
	
	Based on \eqref{eq; model selection problem} and the noise statistics, $\Delta \evec_{\mathcal{L}}\sim\mathcal{N}(0,\sigma_e^2 \Imat)$,
	under each hypothesis, $i$, 
	the log-likelihood function of the measurements, for a given $\Delta\bt$, is 
	\be  \label{eq; log_like}
	\begin{aligned}
		L(\Delta\zvec_{\mathcal{L}}; \cvec_{\Lambda_i},\Delta\bt) &=
		-\frac{\lvert\mathcal{L}\rvert}{2}\ln2\pi\sigma_e^2\hspace{3cm}\\
		&~~~~-\frac{1}{2\sigma_{\evec}^2}\lvert\lvert\Delta\zvec_{\mathcal{L}}-\Hmat_{\mathcal{L},\Lambda_i}\cvec_{\Lambda_i}-\Hmat_{\mathcal{L},\mathcal{V}} \Delta\bt \rvert\rvert^2.
	\end{aligned}
	\ee
	Thus, the ML estimator of the state attack vector, $\cvec_{\Lambda_i}$,  is obtained by maximizing \eqref{eq; log_like} w.r.t. $\cvec_{\Lambda_i}$.
	Since \eqref{eq; log_like} is a concave function w.r.t. $\cvec_{\Lambda_i}$, 
	the ML estimator can be computed by equating the derivative of \eqref{eq; log_like}
	w.r.t. $\cvec_{\Lambda_i}$ to zero, as follows:
	\be \label{eq; likelihood differentiation}
	\begin{aligned}
		\frac{\partial L(\Delta\zvec_{\mathcal{L}}; \cvec_{\Lambda_i}) }{\partial \cvec_{\Lambda_i}} \bigg\rvert_{\cvec_{\Lambda_i}= \hat{\cvec}^{\text{ML}|i}_{\Lambda_i}} \hspace{4.5cm} \\
		= \frac{1}{2\sigma_{\evec}^2}\Hmat_{\mathcal{L},\Lambda_i}^T(\Delta\zvec_{\mathcal{L}}-\Hmat_{\mathcal{L},\Lambda_i}\hat{\cvec}^{\text{ML}|i}_{\Lambda_i} -\Hmat_{\mathcal{L},\mathcal{V}} \Delta\bt)=\boldsymbol{0},
	\end{aligned}
	\ee
	which implies that the ML estimator is
	\be \label{eq; estimated state attack vector} 
	\begin{aligned} \hat{\cvec}^{\text{ML}|i}_{\Lambda_i} 
		&= ( \Hmat_{\mathcal{L},\Lambda_i} ^T \Hmat_{\mathcal{L},\Lambda_i} )^{-1} \Hmat_{\mathcal{L},\Lambda_i}^T  (\Delta\zvec_{\mathcal{L}}- \Hmat_{\mathcal{L},\mathcal{V}}  \Delta\bt ). \\
	\end{aligned}
	\ee
	By using the projection matrix from \eqref{eq; projection matrix}, $\Pmat_{\Lambda_i}$, and its orthogonal projection, $\Pmat_{\Lambda_i}^{\bot}$, we can decompose $\Hmat_{\mathcal{L},\mathcal{V}} \Delta\bt$ as follows:
	\be \label{eq; decomposition of loads}
	\Hmat_{\mathcal{L},\mathcal{V}} \Delta\bt=\Pmat_{\Lambda_i}\Hmat_{\mathcal{L},\mathcal{V}} \Delta\bt+\Pmat_{\Lambda_i}^{\bot}\Hmat_{\mathcal{L},\mathcal{V}} \Delta\bt.
	\ee
	By substituting the constraint from \eqref{eq; model selection problem}, $\Pmat_{\Lambda_i}\Hmat_{\mathcal{L},\mathcal{V}}\Delta\bt  = \boldsymbol{0}$, 
	in \eqref{eq; decomposition of loads}, one obtains
	\be \label{eq; decomposition of loads 2}
	\Hmat_{\mathcal{L},\mathcal{V}} \Delta\bt=\Pmat_{\Lambda_i}^{\bot}\Hmat_{\mathcal{L},\mathcal{V}} \Delta\bt.
	\ee
	Substitution of \eqref{eq; decomposition of loads 2} in \eqref{eq; estimated state attack vector} results in 
	\be \label{eq; estimated state attack vector 2}
	\begin{aligned}
		\hat{\cvec}^{\text{ML}|i}_{\Lambda_i} 
		&= ( \Hmat_{\mathcal{L},\Lambda_i} ^T \Hmat_{\mathcal{L},\Lambda_i} )^{-1}   \Hmat_{\mathcal{L},\Lambda_i}^T(\Delta\zvec_{\mathcal{L}}-\Pmat_{\Lambda_i}^{\bot}\Hmat_{\mathcal{L},\mathcal{V}} \Delta\bt) \\
		&= ( \Hmat_{\mathcal{L},\Lambda_i} ^T \Hmat_{\mathcal{L},\Lambda_i} )^{-1}   \Hmat_{\mathcal{L},\Lambda_i}^T\Delta\zvec_{\mathcal{L}},
	\end{aligned}
	\ee
	where the last equality is obtained from the properties of projection matrices.
	The generalized log likelihood is obtained by substituting \eqref{eq; decomposition of loads 2} and \eqref{eq; estimated state attack vector 2} in \eqref{eq; log_like}: 
	\be \label{eq; log_like_wc}
	\begin{aligned}
		L(\Delta\zvec_{\mathcal{L}};\hat{\cvec}^{\text{ML}|i}_{\Lambda_i}, \Delta\bt)   
		&= -\frac{\lvert\mathcal{L}\rvert}{2}\ln 2\pi\sigma_e^2 \\
		&~~~~-\frac{1}{2\sigma_{\evec}^2}\lvert\lvert \Pmat_{\Lambda_i}^{\bot}(\Delta\zvec_{\mathcal{L}} - \Hmat_{\mathcal{L},\mathcal{V}}\Delta\bt) 
		\rvert\rvert^2.
	\end{aligned}
	\ee
	Additionally, from the properties of projection matrices, it can be verified that (see, e.g. p. 46 in \cite{yanai2011projection}) 
	\be \label{eq; projection matrix norm property 2}
	\begin{aligned}
		\lvert\lvert \Pmat_{\Lambda_i}^{\bot}(\Delta\zvec_{\mathcal{L}}- \Hmat_{\mathcal{L},\mathcal{V}}\Delta\bt)  \rvert\rvert^2 = \\
		& \hspace{-2.5cm} 	\lvert\lvert (\Delta\zvec_{\mathcal{L}}- \Hmat_{\mathcal{L},\mathcal{V}}\Delta\bt)  \rvert\rvert^2 
		-\lvert\lvert \Pmat_{\Lambda_i}(\Delta\zvec_{\mathcal{L}}- \Hmat_{\mathcal{L},\mathcal{V}}\Delta\bt)  \rvert\rvert^2. 
	\end{aligned}
	\ee
	By substituting the constraint from \eqref{eq; model selection problem}, $	\Pmat_{\Lambda_i}\Hmat_{\mathcal{L},\mathcal{V}}\Delta\bt  = \boldsymbol{0}$, 
	in \eqref{eq; projection matrix norm property 2} we obtain
	\be \label{eq; projection matrix norm property}
	\begin{aligned}
		\lvert\lvert \Pmat_{\Lambda_i}^{\bot}(\Delta\zvec_{\mathcal{L}}- \Hmat_{\mathcal{L},\mathcal{V}\Delta\bt)}  \rvert\rvert^2= 	 \\
		&\hspace{-2.5cm} \lvert\lvert (\Delta\zvec_{\mathcal{L}}- \Hmat_{\mathcal{L},\mathcal{V}}\Delta\bt)  \rvert\rvert^2
		-\lvert\lvert \Pmat_{\Lambda_i}\Delta\zvec_{\mathcal{L}}  \rvert\rvert^2,
	\end{aligned}
	\ee
	$\forall \Lambda_i\in \mathcal{G}_{K_{\cvec}}.$
	By substituting \eqref{eq; projection matrix norm property} in \eqref{eq; log_like_wc} we obtain 
	\be \label{eq; log_like_wc 2} 
	\begin{aligned}
		L(\Delta\zvec_{\mathcal{L}};\hat{\cvec}^{\text{ML}|i}_{\Lambda_i}, \Delta\bt) &= -\frac{\lvert\mathcal{L}\rvert}{2}\ln 2\pi\sigma_e^2 \\
		&\hspace{-2.5cm}  -\frac{1}{2\sigma_{\evec}^2}\lvert\lvert (\Delta\zvec_{\mathcal{L}}- \Hmat_{\mathcal{L},\mathcal{V}}\Delta\bt)  \rvert\rvert^2+\frac{1}{2\sigma_{\evec}^2}\lvert\lvert \Pmat_{\Lambda_i}\Delta\zvec_{\mathcal{L}}  \rvert\rvert^2,
	\end{aligned}
	\ee
	$\forall \Lambda_i\in \mathcal{G}_{K_{\cvec}}.$
	By substituting \eqref{eq; log_like_wc 2} into \eqref{eq; GIC function},
	we obtain that the GIC statistic satisfies
	\be \label{eq; GIC prior}
	\begin{aligned}
		\text{GIC}(\Lambda_i,\tau(\lvert\Lambda_i\rvert, \lvert\mathcal{L}\lvert)  ) =\\
		&\hspace{-2cm} -\lvert\mathcal{L}\rvert\ln 2\pi\sigma_e^2 -\frac{1}{\sigma_{\evec}^2}\lvert\lvert (\Delta\zvec_{\mathcal{L}}- \Hmat_{\mathcal{L},\mathcal{V}}\Delta\bt)\rvert\rvert^2  \\
		&\hspace{-2cm}+ \frac{1}{\sigma_{\evec}^2}\lvert\lvert \Pmat_{\Lambda_i}\Delta\zvec_{\mathcal{L}}  \rvert\rvert^2 -\tau(\lvert\Lambda_i\rvert, \lvert\mathcal{L}\lvert),
	\end{aligned}
	\ee
	where $~{i=0,1,\ldots,\lvert\mathcal{G}_{K_{\cvec}}\rvert}$.
	Since the term $$-\lvert\mathcal{L}\rvert\ln 2\pi\sigma_e^2 -\frac{1}{\sigma_{\evec}^2}\lvert\lvert (\Delta\zvec_{\mathcal{L}}- \Hmat_{\mathcal{L},\mathcal{V}}\Delta\bt)\rvert\rvert^2$$
	is independent of the hypothesis, i.e. is not a function of $\Lambda_i$, maximizing the GIC rule from \eqref{eq; GIC prior} w.r.t. $i=0,1,\ldots,\lvert\mathcal{G}_{K_{\cvec}}\rvert$, is equivalent to
	maximizing the r.h.s. of \eqref{eq; GIC} w.r.t. the same candidate sets, $i=0,1,\ldots,\lvert\mathcal{G}_{K_{\cvec}}\rvert$. 
	
	\section{Oracle GLRT} \label{App; discussion}
	While the GLRT in \eqref{eq; GLRT binary 2} was developed for the  binary modified hypothesis testing
	problem in \eqref{eq; model selection problem}, 
	, it should be analyzed based on the binary version of the original hypothesis testing  in
	\eqref{eq; model selection problem the} with only $\Hnull$ and a single $\mathcal{H}_i$. 
	In this case, the GLRT detector is distributed as follows 
	(see, e.g. Appendix $7$B in \cite{kayfundamentals}):
	\be \label{eq; chi} 
	\begin{cases}
		\mathcal{H}_0:&T^{GLRT|i} \sim   \chi^2_{\lvert\Lambda_i\rvert} \big(    \frac{\lvert\lvert\Pmat_{\Lambda_i}\Hmat_{\mathcal{L},\mathcal{V}}\Delta\bt) \rvert\rvert^2}{\sigma_{\evec}^2} \big ) \\
		\mathcal{H}_i:&T^{GLRT|i} \sim   \chi^2_{\lvert\Lambda_i\rvert} \big(   \frac{\lvert\lvert\Pmat_{\Lambda_i}(\Hmat_{\mathcal{L},\Lambda_i}\cvec_{\Lambda_i}  +\Hmat_{\mathcal{L},\mathcal{V}}\Delta\bt) \rvert\rvert^2}{\sigma_{\evec}^2} \big ),
	\end{cases}
	\ee
	where $\chi^2_r(\lambda)$ denotes a non-central $\chi$-square distribution with $r$ degrees of freedom and a non-centrality parameter $\lambda$. 
	
	From \eqref{eq; chi} it can be verified that the probability of false alarm and the probability of detection of the GLRT from \eqref{eq; GLRT binary 2} are given by (see Subsection 2.2.3 in \cite{kayfundamentals}):
	\be \label{eq; false}
	P_{fa}=\mathcal{Q}_{\frac{\lvert\Lambda_i\rvert}{2}}  \left( \frac{\lvert\lvert\Pmat_{\Lambda_i}\Hmat_{\mathcal{L},\mathcal{V}}\Delta\bt \rvert\rvert}{\sigma_e},\sqrt{\gamma^{GLRT} }\right)
	\ee
	and 
	\be \label{eq; detect}
	P_{d}=\mathcal{Q}_{\frac{\lvert\Lambda_i\rvert}{2}}  \left(\frac{\lvert\lvert\Pmat_{\Lambda_i}(\Hmat_{\mathcal{L},\Lambda_i}\cvec_{\Lambda_i}  +\Hmat_{\mathcal{L},\mathcal{V}}\Delta\bt) \rvert\rvert}{\sigma_e},\sqrt{\gamma^{GLRT} } \right),
	\ee
	respectively, where $\mathcal{Q}_{v}(a,b)$ is the generalized Marcum Q-function of order $v$ \cite{sun2010monotonicity}.
	From \cite{sun2010monotonicity}, $\mathcal{Q}_{v}(a,b)$ strictly increases as $a$ increases. 
	That is, 
	\be \label{eq; mono}
	\mathcal{Q}_{v}(\sqrt{a_1},\sqrt{b})<\mathcal{Q}_{v}(\sqrt{a_1+a_2},\sqrt{b})
	\ee
	for all $a_1\ge 0$ and $a_2,v,b>0$.
	Thus, the probability of false alarm in \eqref{eq; false} strictly increases as $ \lvert\lvert\Pmat_{\Lambda_i}\Hmat_{\mathcal{L},\mathcal{V}}\Delta\bt \rvert\rvert$ increases.
	Consequently, by considering the worst case in \eqref{eq; projection loads} of $\lvert\lvert\Pmat_{\Lambda_i}\Hmat_{\mathcal{L},\mathcal{V}}\Delta\bt) \rvert\rvert^2=\epsilon_i\eta$,
	we obtain that the probability of false alarm in \eqref{eq; false} strictly increases as $ \epsilon_i\eta$ increases.
	
	The following proposition defines upper and lower bounds on the approximation error $\epsilon_i$. 
	\begin{prop} \label{prop; approximation}
		The approximation error from \eqref{eq; approximation error} can be bounded by 
		\be\label{eq; approx bound}
		0\le \epsilon_i \le 1. 
		\ee
	\end{prop}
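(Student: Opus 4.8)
The plan is to recognize that $\epsilon_i$ as defined in \eqref{eq; approximation error} is nothing more than the squared cosine of the angle between the vector $\Hmat_{\mathcal{L},\mathcal{V}}\Delta\bt$ and its orthogonal projection onto $\text{col}(\Hmat_{\mathcal{L},\Lambda_i})$, so both bounds follow directly from elementary properties of orthogonal projection matrices. Concretely, I would start from the identity in \eqref{eq; projection property}, namely
\be \label{eq; approx pyth}
\lvert\lvert \Pmat_{\Lambda_i}^{\bot} \Hmat_{\mathcal{L},\mathcal{V}} \Delta\bt \rvert\rvert^2= \lvert\lvert  \Hmat_{\mathcal{L},\mathcal{V}} \Delta\bt \rvert\rvert^2-  \lvert\lvert \Pmat_{\Lambda_i} \Hmat_{\mathcal{L},\mathcal{V}} \Delta\bt \rvert\rvert^2 ,
\ee
which is the Pythagorean decomposition induced by the complementary projections $\Pmat_{\Lambda_i}$ and $\Pmat_{\Lambda_i}^{\bot}$ defined in \eqref{eq; projection matrix}.

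For the upper bound, I would observe that the left-hand side of \eqref{eq; approx pyth} is a squared norm and is therefore nonnegative; rearranging gives $\lvert\lvert \Pmat_{\Lambda_i} \Hmat_{\mathcal{L},\mathcal{V}} \Delta\bt \rvert\rvert^2 \le \lvert\lvert  \Hmat_{\mathcal{L},\mathcal{V}} \Delta\bt \rvert\rvert^2$. Dividing both sides by $\lvert\lvert  \Hmat_{\mathcal{L},\mathcal{V}} \Delta\bt \rvert\rvert^2$ — which is strictly positive, since otherwise the ratio in \eqref{eq; approximation error} would be undefined and, moreover, under Assumption \ref{A; quasi} a nonzero typical load change is implicitly assumed — yields $\epsilon_i \le 1$. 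For the lower bound, $\lvert\lvert \Pmat_{\Lambda_i} \Hmat_{\mathcal{L},\mathcal{V}} \Delta\bt \rvert\rvert^2 \ge 0$ is again just nonnegativity of a norm, so dividing by the positive denominator gives $\epsilon_i \ge 0$. Combining the two inequalities gives \eqref{eq; approx bound}.

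The argument is essentially immediate, so there is no real obstacle; the only point that warrants a sentence of care is the well-definedness of the ratio, i.e. $\lvert\lvert  \Hmat_{\mathcal{L},\mathcal{V}} \Delta\bt \rvert\rvert^2 \neq 0$, which I would handle by noting that the case $\Hmat_{\mathcal{L},\mathcal{V}} \Delta\bt = \zerovec$ is degenerate (the embedding relation \eqref{eq; embedding} holds trivially for any $\epsilon_i$) and can be excluded without loss of generality. Alternatively, one can avoid \eqref{eq; approx pyth} altogether and argue directly from idempotency and symmetry of $\Pmat_{\Lambda_i}$: $\lvert\lvert \Pmat_{\Lambda_i}\vvec\rvert\rvert^2 = \vvec^T\Pmat_{\Lambda_i}\vvec \le \lvert\lvert\vvec\rvert\rvert^2$ by Cauchy--Schwarz together with $\lvert\lvert\Pmat_{\Lambda_i}\vvec\rvert\rvert \le \lvert\lvert\vvec\rvert\rvert$ (an orthogonal projection is a contraction), applied to $\vvec = \Hmat_{\mathcal{L},\mathcal{V}}\Delta\bt$; I would present whichever of the two phrasings is shortest in context.
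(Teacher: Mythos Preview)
Your proposal is correct and essentially mirrors the paper's own proof: both establish $0\le\lvert\lvert \Pmat_{\Lambda_i}\Hmat_{\mathcal{L},\mathcal{V}}\Delta\bt\rvert\rvert^2\le\lvert\lvert\Hmat_{\mathcal{L},\mathcal{V}}\Delta\bt\rvert\rvert^2$ (the paper by citing the projection-contraction property, you by the Pythagorean identity already stated in \eqref{eq; projection property}) and then divide by the strictly positive denominator. The only cosmetic difference is that the paper justifies $\Hmat_{\mathcal{L},\mathcal{V}}\Delta\bt\neq\zerovec$ via Assumption~\ref{A; loads} rather than Assumption~\ref{A; quasi}.
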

	\begin{proof}
		By using the properties of projection matrices it can be verified (see Theorem 2.22 in \cite{yanai2011projection}) that
		\be \label{eq; projection bound}
		0\le \lvert\lvert\Pmat_{\Lambda_i}\Hmat_{\mathcal{L},\mathcal{V}}\Delta\bt\rvert\rvert^2 \le \lvert\lvert \Hmat_{\mathcal{L},\mathcal{V}}\Delta \bt \rvert\rvert^2.
		\ee
		In addition, Assumption \ref{A; loads} implies that the typical load changes are nonzero, i.e. $\Hmat_{\mathcal{L},\mathcal{V}}\Delta \bt \ne 0$. 
		By dividing the strictly positive term $\lvert\lvert \Hmat_{\mathcal{L},\mathcal{V}}\Delta \bt \rvert\rvert^2$ from the inequality in \eqref{eq; projection loads}  we obtain
		\be \label{eq; projection bound 2}
		0\le \frac{\lvert\lvert\Pmat_{\Lambda_i}\Hmat_{\mathcal{L},\mathcal{V}}\Delta\bt\rvert\rvert^2}{\lvert\lvert \Hmat_{\mathcal{L},\mathcal{V}}\Delta \bt \rvert\rvert^2} \le 1.
		\ee
		Thus, by substituting \eqref{eq; approximation error} in \eqref{eq; projection bound 2} we get \eqref{eq; approx bound}. 
	\end{proof}
	
	By applying the properties in \eqref{eq; mono} and \eqref{eq; approx bound}, from Proposition \ref{prop; approximation}, on \eqref{eq; false}, we obtain \eqref{eq; fa bound}. 
	
	\section{Single-node attack} \label{App; single attack}
	In Proposition 1 in \cite{sedghi2015statistical} it was shown that under the assumption that the injected powers in the nodes are  
	Gaussian distributed and mutually independent, 
	which holds under the model in \eqref{eq; 1},
	the columns of the topology matrix satisfy
	\be \label{eq; App 0}
	\Hmat_{\mathcal{V},m}^T\Hmat_{\mathcal{V},k}=0,~~\forall k,m\in\mathcal{V},~d(k,m)>2.
	\ee
	Similarly, the columns of the subtopology matrix associated with the load buses satisfy
	\be \label{eq; App 1}
	\Hmat_{\mathcal{L},m}^T\Hmat_{\mathcal{L},k}=0,~~\forall k,m\in\mathcal{V},~d(k,m)>2.
	\ee
	By substituting \eqref{eq; projection matrix} with $\Lambda_i=\{m\}$ in the l.h.s. of \eqref{eq; 2nd 0} we obtain
	\be \label{eq; App 2}
	\begin{aligned}
		\Pmat_{\{m\}}\Hmat_{\mathcal{L},k}c_k&=\Hmat_{\mathcal{L},m}(\Hmat_{\mathcal{L},m}^T\Hmat_{\mathcal{L},m})^{-1}\Hmat_{\mathcal{L},m}^T\Hmat_{\mathcal{L},k}c_k.
	\end{aligned}
	\ee
	Thus, by substituting \eqref{eq; App 1} in \eqref{eq; App 2} we obtain \eqref{eq; 2nd 0}.
	In addition, by using the properties of projection matrices it can be verified (see Theorem 2.22 in \cite{yanai2011projection}) that
	\be \label{eq; App2 3}
	\lvert\lvert \Pmat_{\{m\}}\Hmat_{\mathcal{L},k}\cvec_k \rvert\rvert\le \lvert\lvert\Hmat_{\mathcal{L},k}\cvec_k\rvert\rvert,~~\forall k,m\in\mathcal{V},
	\ee
	and
	\be \label{eq; App2 4}
	\Pmat_{\{k\}}\Hmat_{\mathcal{L},k}\cvec_k= \Hmat_{\mathcal{L},k}\cvec_k.
	\ee
	Hence, by substituting \eqref{eq; App2 4} in \eqref{eq; App2 3} we obtain \eqref{eq; all orders}.
\end{appendices}

\bibliographystyle{IEEEtran}
\bibliography{FDIdetection}

\end{document}